\newtheorem{definition}{Definition}
\newtheorem{theorem}{Theorem}
\newtheorem{corollary}{Corollary}
\newcommand{\sref}[1]{Section~\ref{#1}}
\newcommand{\fref}[1]{Figure~\ref{#1}}
\newcommand{\cref}[1]{Constraint~\ref{#1}}
\newcommand{\thref}[1]{Theorem~\ref{#1}}
\newcommand{\ignore}[1]{}
\begin{document}
\includecomment{doublecol}\excludecomment{singlecol} 

\title{Partially Blind Instantly Decodable Network Codes for Lossy Feedback Environment}
\author{Sameh~Sorour,~\IEEEmembership{Member,~IEEE,} Ahmed~Douik,~\IEEEmembership{Student Member,~IEEE,}
        Shahrokh~Valaee,~\IEEEmembership{Senior Member,~IEEE}, Tareq~Y.~Al-Naffouri,~\IEEEmembership{Senior Member,~IEEE} Mohamed-Slim~Alouini,~\IEEEmembership{Fellow,~IEEE}
\thanks{Sameh Sorour and Mohamed-Slim Alouini are with the Computer, Electrical and Mathematical Sciences and Engineering (CEMSE) Division, King Abdullah University of Science and Technology (KAUST), Thuwal, Makkah Province, Saudi Arabia, email:$\{\mbox{sameh.sorour, slim.alouini}\}$@kaust.edu.sa}
\thanks{Ahmed Douik is with both Tunisia Polytechnic School (TPS), Tunisia, and the CEMSE Division at King Abdullah University of Science and Technology (KAUST), Thuwal, Makkah Province, Saudi Arabia, email: ahmed.douik@kaust.edu.sa}
\thanks{Shahrokh Valaee is with the Edward S. Rogers Sr. Department of Electrical and Computer Engineering,
    University of Toronto, Canada, e-mail: valaee@comm.utoronto.ca.}
    \thanks{Tareq Y. Alnaffouri is with both the CEMSE Division at King Abdullah University of Science and Technology (KAUST), Thuwal, Makkah Province, Saudi Arabia, and the Electrial Engineering Department at King Fahd University of Petroleum and Minerals (KFUPM), Dhahran, Eastern Province, Saudi Arabia, e-mail: tareq.alnaffouri@kaust.edu.sa.}
    \thanks{This work is an extension to the first and third authors' paper \cite{PIMRC11}, published in the IEEE International Symposium on Personal, Indoor and Mobile Radio Communications (PIMRC), September 2011.}
     }

\maketitle

\IEEEoverridecommandlockouts

\begin{abstract}
In this paper, we study the multicast completion and decoding delay minimization problems of instantly decodable network coding (IDNC) in the case of lossy feedback. In such environments, the sender falls into uncertainties about packet reception at the different receivers, which forces it to perform partially blind selections of packet combinations in subsequent transmissions. To determine efficient partially blind policies that handle the completion and decoding delays of IDNC in such environment, we first extend the perfect feedback formulation in \cite{GC10,TON10-CD} to the lossy feedback environment, by incorporating the uncertainties resulting from unheard feedback events in these formulations. For the completion delay problem, we use this formulation to identify the maximum likelihood state of the network in events of unheard feedback, and employ it to design a partially blind graph update extension to the multicast IDNC algorithm in \cite{TON10-CD}. For the decoding delay problem, we derive an expression for the expected decoding delay increment for any arbitrary transmission. This expression is then used to derive the optimal policy to reduce the decoding delay in such lossy feedback environment. Results show that our proposed solution both outperforms other approaches and achieves a tolerable degradation even at relatively high feedback loss rates.
\end{abstract}
\ignore{\begin{keywords}
Wireless Multicast, Instantly Decodable Network Coding; Lossy Feedback.
\end{keywords}}

\section{Introduction}
\IEEEPARstart{N}{etwork} coding (NC) has shown great abilities to substantially improve transmission efficiency, throughput and delay over broadcast erasure channels \cite{4895447,4476183,Drinea2009,4397057,5152148}. In \cite{ICC10,Sadeghi2010,5425315,GC10,Li2011,6030131}, an important sub-class of opportunistic NC was studied, namely the \emph{instantly decodable network coding (IDNC)}, in which received packets are allowed to be decoded only at their reception instant and cannot be stored for future decoding. IDNC was considered in these works due to its practicality and numerous desirable properties, such as instant packet recovery, simple XOR-based packet encoding and decoding, and no additional buffer to store un-decoded packets. According to its definition, the sender must select a network coded packet combination in each transmission, such that a selected subset of the receivers (or all of them if possible) can decode a new source packet once they receive this coded packet. The selection of the appropriate packet combinations, which are instantly decodable at specific sets or all the receivers, is done through what is known as the \emph{IDNC graph} \cite{ISIT09,ICC10,GC10,TON10-CD,6030131}.

The problem of minimizing the completion delay for IDNC was considered in \cite{ICC10,TON10-CD} for wireless networks with erasure channels on the forward links from the sender to the receivers. This problem was formulated as a stochastic shortest path (SSP) problem, which turned out to be very complex to solve in real-time. Nonetheless, the analysis of the properties and structure of this SSP was employed to design simple maximum weight clique search algorithms minimizing completion delay in IDNC. The designed algorithms were shown to almost achieve the optimal completion delay in wireless multicast and broadcast scenarios. In \cite{GC10}, the \emph{decoding delay} minimization problem in IDNC was considered for the same environment. In this work, an expression for the \emph{decoding delay} increments for any arbitrary IDNC transmission was derived and used to minimize the IDNC decoding delay, again using simple maximum weight clique search algorithms.

Nonetheless, the proposed algorithms in \cite{ICC10,TON10-CD,GC10}, and most other opportunistic network coding works, assume perfect feedback (PF) (i.e feedback from all the receivers always arrives at the sender and is not subject to loss). This assumption is not always practical in wireless networks, which suffer from similar wireless channel impairments on both the forward and reverse links. Consequently, even if a high level of protection for feedback packets can be employed in several network settings, such as cellular, WiFi and WiMAX systems, unavoidable occasions of deep fading over wireless channels can still expose them to loss events. Moreover, other network settings cannot guarantee the arrival of each feedback packet at the sender due to transmission power limitations and possible interference with other feedback.

In such lossy feedback scenarios, the sender may not receive feedback from a subset of the targeted receivers after any given transmission, due to erasure occasions on their reverse links. In this case, the reception status for this packet at these receivers becomes uncertain. For each of these receivers, the sender cannot determine whether its sent packet was lost on the forward link of this receiver or the feedback was lost on its reverse link. Despite this uncertainty, the sender is still required to perform subsequent IDNC transmissions, and thus must blindly estimate the status of all such receivers. In subsequent transmissions, the sender may receive feedback packets from some of these uncertain receivers but may not hear the feedback from others. Consequently, the sender may need to perform partially blind IDNC decisions in many transmissions until a completion feedback is received from all the receivers.

In this paper, we address the following question: \emph{How to perform partially blind selections of IDNC packet combinations to minimize the IDNC completion and decoding delays in lossy feedback (LF) scenarios?} To answer this question, we need to extend our formulations, derivations and proposed algorithms in \cite{ICC10,TON10-CD,GC10} by incorporating in them the uncertainties resulting from unheard feedback events. For the completion delay problem, we extend our SSP formulation for the perfect feedback scenario in \cite{TON10-CD} to a partially observable stochastic shortest path (POSSP) problem \cite{Patek99onpartially} in the lossy feedback environment. From this extended formulation, we derive the maximum likelihood estimate of the network state in case of unheard feedback events from one or multiple receivers. We then propose a partially blind IDNC algorithm that makes coding decisions\ignore{ as in \cite{TON10-CD},} after performing partially blind updates on the IDNC graph to follow the maximum likelihood state of the network. For the decoding delay problem, we derive the expected decoding delay increments for any arbitrary transmission in the presence of reception uncertainties. This derived expression is then used to define the new policy for decoding delay minimization in lossy feedback environment. Using extensive simulations, we compare the proposed algorithm with the several other approaches available in the literature.

The rest of the paper is organized as follows. In \sref{sec:model}, we introduce the system model
and parameters. The IDNC graph is illustrated in \sref{sec:IDNC-graph}. In \sref{sec:CD-extension}, we present the POSSP extension of the PF SSP formulation in the lossy feedback environment, derive the maximum belief state (i.e. maximum likelihood state) of the network state, and propose a partially blind IDNC algorithm to solve the completion delay problem in lossy feedback environments. In \sref{sec:DD-extension}, the expected decoding delay increments of any transmission is derived for the lossy feedback environment and is used to design a partially blind algorithm to minimize the decoding delay. The performance of the proposed algorithms is compared with several algorithms in the literature in \sref{sec:simulations}. Finally, \sref{sec:conclusion} concludes the paper.

\section{System Model and Parameters} \label{sec:model}
The system model consists of a wireless sender having a frame (denoted by $\mathcal{N}$) of $N$ source packets. Each receiver in the set (denoted by $\mathcal{M}$) of $M$ receivers is interested in receiving a subset of $\mathcal{N}$. We will refer to the requested and undesired packets by any receiver as its ``primary'' and ``secondary'' packets, respectively. The sender initially broadcasts the $N$ packets uncoded over erasure channels to the $M$ receivers. Each receiver listens to all the transmitted packets (even the ones that it does not want) and feeds back to the sender a positive acknowledgement (ACK) for each received (non-erased) packet and a negative acknowledgement (NACK) for each erased packet. These feedback packets are also subject to erasure on the reverse channels. We will refer to the packets, for which the sender did not receive a feedback from a given receiver, by the ``uncertain'' packets of that receiver. At the end of this initial transmission phase, the sender can attribute four feedback sets of packets to each receiver $i$:
\begin{itemize}
\item The \emph{Has} set (denoted by $\mathcal{H}_i$) is defined as the set of primary and secondary packets correctly received by receiver $i$ and its feedback was successfully received by the sender.\ignore{ This set includes both desired and undesired packets by this receiver.}
\item The \emph{Lacks} set (denoted by $\mathcal{L}_i$) is defined as the set of primary and secondary packets that were either acknowledged by a NACK to the sender by receiver $i$ or are uncertain. In other words, $\mathcal{L}_i = \mathcal{N} \setminus \mathcal{H}_i$.
\item The \emph{Wants} set (denoted by $\mathcal{W}_i$) is defined as the subset of primary packets in $\mathcal{L}_i$ (i.e. packets in $\mathcal{L}_i$ that receiver $i$ wants to receive).
\item The Uncertain set (denoted by $\mathcal{U}_i)$ is defined as the subset of uncertain packets in $\mathcal{L}_i$.
\end{itemize}
The sender stores this information in a \emph{state feedback matrix (SFM)} $\mathbf{F}_\mathds{U} = [f_{ij}]~\forall~i\in\mathcal{M},j\in\mathcal{N}$, such that:
\begin{equation}
f_{ij} =
\begin{cases}
0 &\qquad  j \in \mathcal{H}_i \\
- 1 &\qquad  j \in \mathcal{L}_i\setminus \left(\mathcal{W}_i \cup \mathcal{U}_i\right)\\
1 &\qquad j \in \mathcal{W}_i\setminus\mathcal{U}_i\\
\mathbf{x} & \qquad j\in\mathcal{U}_i\;.
\end{cases}
\end{equation}
In the recovery phase, the sender exploits these feedback sets to transmit network coded packets, which consist of a binary XOR of a subset or all the source packets in $\mathcal{N}$. Each of these transmitted coded packets can be one of the following three options for each receiver $i$:
\begin{itemize}
\item \emph{Non-Innovative}: A packet is non-innovative for receiver $i$ if it does not contain any source packet that is both wanted by $i$ and was not received by it in any previous transmission.
\item \emph{Instantly Decodable}: A packet is instantly decodable for receiver $i$ if it contains \emph{only one source packet} that was not received by it in any previous transmission.
\item \emph{Non-Instantly Decodable}: A packet is non-instantly decodable for receiver $i$ if it contains {two or more source packets} that were not previously received by it in any previous transmission..
\end{itemize}
The receivers for which the packet sent by the sender is instantly decodable are called \emph{targeted receivers}. The receivers that receive non-innovative and non-instantly decodable packets discard them and do not send any acknowledgements. On the other hand, the receivers that receive instantly decodable packets feed back an ACK packet to the sender with all their received packets. This process is repeated until all receivers obtain all their requested packets and the sender receives a completion feedback acknowledgement from all the receivers (i.e. a feedback from every receiver showing that it received all its requested packets). To be fair in comparison with the original perfect feedback formulation and algorithms in \cite{ICC10,TON10-CD}, in terms of feedback frequency, we will assume that a receiver does not send any feedback unless it is targeted by a packet.

Let $p_i$ and $\overline{p}_i = 1-p_i$ be the data packet erasure and success probabilities, respectively, observed by receiver $i$ on the forward link within a frame of packets. Also, let $q_i$ be the feedback erasure probability observed by receiver $i$ on the reverse link. It is fair to assume that $p_i > q_i$ due to the larger size of data packets compared to feedback packets, the stronger protection usually employed for control packets, and the stronger interference levels and fading levels experienced by the receivers (especially those at cell edges) compared to the sender. Also, let $\mu_i$ be the demand ratio of receiver $i$, defined as the ratio of its primary packets in the frame to the total frame size $N$. Finally, define $\mu = \frac{1}{M}\sum_{i=1}^M \mu_i$ as the average demand ratio of all receivers.

Finally, we define the completion and decoding delays as follows:
\begin{definition}[Completion Delay]
The completion delay of a frame is the number of recovery transmissions required until all receivers obtain all their requested packets.
\end{definition}
\begin{definition}[Decoding Delay]
For any transmission in recovery phase, a receiver $i$, with non-empty Wants set, experiences one unit increase of \emph{decoding delay} if it successfully receives a packet that does not allow it to decode a non-previously received source packet in its Wants set. In other words, a received packet at receiver $i$ does not increase its decoding delay if and only if both following conditions are satisfied:
\begin{itemize}
\item This received packet is instantly decodable for that receiver.
\item The decodable source packet from this received packet is in its Wants set.
\end{itemize}
\end{definition}
Note that the definition of decoding delay does not count channel inflicted delays due to erasures (i.e. an erased packet at a receiver does not add to its decoding delay), but rather counts sender inflicted delays when its coding algorithm is not able to provide instantly-decodable packets to the different receivers.

\section{The IDNC Graph} \label{sec:IDNC-graph}
The IDNC graph provides a framework to determine all possible combinations of source packets that are instantly decodable for any subset or all the receivers. This graph $\mathcal{G}$ is constructed by first generating a vertex $v_{ij}$ in $\mathcal{G}$ for each packet $j \in \mathcal{L}_i$, and for all receivers (i.e. $\forall~i\in\mathcal{M}$. Two vertices $v_{ij}$ and $v_{kl}$ in $\mathcal{G}$ are adjacent if one of the following conditions is
true:
\begin{itemize}
\item C1: $j = l$, i.e. the two vertices are induced by the loss of the same packet $j$ by two different receivers $i$ and $k$.
\item C2: $j\in \mathcal{H}_k$ and $l \in \mathcal{H}_i$, i.e. the requested packet of each vertex is in the Has set of the receiver that induced the other vertex.
\end{itemize}
Consequently, each edge between two vertices $v_{ij}$ and $v_{kl}$ in the graph means that the source packets $j$ and $l$ can be simultaneously received/decoded at receivers $i$ and $k$, respectively, by sending either packet $j$ if $j=l$ or the coded packet $j\oplus l$ otherwise. This property extends from two adjacent vertices to every clique in the graph. A clique in a graph is a subset of this graph whose vertices are all adjacent to one another. Thus, each clique in  $\mathcal{G}$ defines a packet combination that can instantly serve all the receivers inducing this clique's vertices. Since we are concerned with not missing an opportunity of serving any possible receivers in any transmission, we only consider maximal cliques (i.e. cliques that are not a subset of any larger cliques).

According to receivers' packet requests, we can classify the vertices of this graph into two layers:
\begin{itemize}
\item Primary graph $\mathcal{G}_\rho$: It includes all the vertices of each receiver $i$, $\forall~i\in\mathcal{M}$, which are in its Wants sets.
\item Secondary graph $\mathcal{G}_\sigma$: It includes all the vertices of each receiver $i$, $\forall~i\in\mathcal{M}$, which are in its Lacks set but not in its Wants set.
\end{itemize}
\fref{fig:IDNC-graph} depicts an example of SFM and its corresponding two-layered IDNC graph.
\begin{figure}[t]
\centering
  \includegraphics[width=0.55\linewidth]{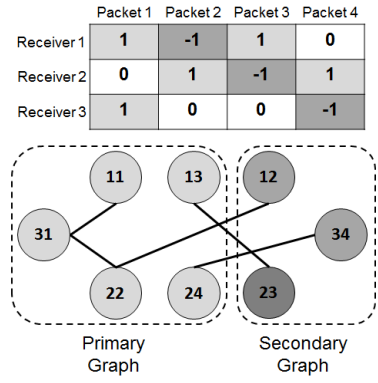}\\
  \caption{Example of a state feedback matrix and its corresponding IDNC graph. Each vertex denoted by the two digits $ij$ in the graph represents a 1 or -1 entry in the $i$-th row and $j$-th column of feedback matrix. The light and dark shaded boxes and vertices represent the requested (1 entries) and undesired packets (-1 entries), respectively.}\label{fig:IDNC-graph}
\end{figure}
Since the primary graph is the one that needs to be depleted by the sender (as it is the one that represents the receiver requests), the sender needs to mainly serve the packet combination represented by the best maximal clique $\kappa_\rho$ in the primary graph for any given transmission, without considering the secondary graph. Nonetheless, it can also benefit from the secondary graph to deliver undesired packets to non-targeted receivers by $\kappa_\rho$ (i.e. receivers not having vertices in $\kappa_\rho$) in the same transmission, without violating the instant decodability condition of the primary maximal clique $\kappa_\rho$. Serving these packets can increase the coding opportunities of these receivers in subsequent transmissions. This step can be done by finding the best secondary maximal clique $\kappa_\sigma$ from the connected secondary subgraph to the primary maximal clique $\kappa_\rho$.

Thus, the sender generates a packet combination for any given transmission by building the IDNC graph representing the SFM just before this transmission, and XORing all the packets identified by the vertices of the selected clique $\kappa = \kappa_\rho \cup \kappa_\sigma$ from this IDNC graph. In the rest of the paper, by ``the transmission of $\kappa$'' we mean the transmission of the XOR of the packets identified from clique $\kappa$ in the IDNC graph.

\section{Completion Delay Problem} \label{sec:CD-extension}
In this section, we study the effect of having feedback loss probabilities on the formulation and solution of the completion delay problem in IDNC. We will start by presenting the formulation and solution for the perfect feedback environment and then extend them to the lossy feedback environment.

\subsection{SSP Formulation for Perfect Feedback Environment}\label{sec:perfect-formulation}
The minimum completion delay problem in perfect feedback scenario is formulated in \cite{ICC10,TON10-CD} as an SSP. It consists of a state space $\mathcal{S}$ representing all SFM possibilities from the start of the recovery phase until completion. The action space of each state consists of all packet combinations identified by its SFM's IDNC graph. Transition probabilities between states reflect the taken action (in our case the chosen clique for transmission) and the different erasure probabilities of target receivers. Finally, the cost of each action is one transmission.

In \cite{TON10-CD}, the properties of this intractable SSP were analyzed and it has been shown that the best strategy to reduce the completion delay is to give more priority to the receivers with the largest Wants sets and erasure probabilities. This can be done by assigning weights $\psi_i = \left(\frac{|\mathcal{W}_i|}{\overline{p}_i}\right)^m$ ($m$ is a biasing factor) to the vertices of each receiver $i$ in the IDNC graph, then running maximum weight clique algorithms on the primary graph and then on the secondary graph.

\subsection{Lossy Feedback Extension: Belief State}\label{sec:lossy-formulation}
The difference between perfect and lossy environments is the uncertainties introduced at the sender due to unheard feedback occurrences. In perfect feedback environment, unheard feedback events at the sender from a targeted receiver make the sender certain that the sent packet was lost on the receiver's forward link. However, the lossy feedback environment adds the other possibility of packet reception on the receiver's forward link and the loss of the feedback on its reverse link. Each of these possibilities happens independently for each of the targeted receivers with unheard feedback. In this case, the sender cannot determine the exact SFM state of the network, and thus cannot accurately select the subsequent IDNC packet. This notion of uncertainty is illustrated in \fref{fig:LF-example}, depicting an example of the system in \fref{fig:IDNC-graph} after performing action $1\oplus 4$.

\begin{figure}[t]
\centering
  \includegraphics[width=0.9\linewidth]{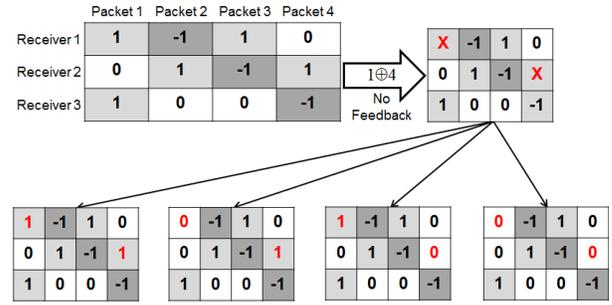}
    \caption{Belief state after taking action $1\oplus 4$ for the example of \fref{fig:IDNC-graph} and receiving no feedback from receivers 1 and 2. This results in the $\mathbf{x}$ entries in the top right matrix. Consequently, there exists four different possibilities for the actual state of the matrix shown in the four bottom matrices. Thus, there exists a non-zero probability that the actual state of the matrix is one of these four bottom matrices. The belief state consists of these probabilities.}\label{fig:LF-example}
\end{figure}

This resulting uncertainty converts the SSP formulation of the perfect feedback environment into a partially observable SSP (POSSP) problem \cite{Patek99onpartially}. This POSSP formulation of the lossy feedback scenario is an extension to the SSP of the perfect feedback scenario, by adding to it the POSSP's belief state. In a POSSP, the belief state $\mathbf{b}\left(\mathbf{F}_\mathds{U}\right) = \left[b\left(s\big|\mathbf{F}_\mathds{U}\right)\right]~\forall~s\in\mathcal{S}$ of an SFM $\mathbf{F}_\mathds{U}$ is defined as the probability distribution vector over the state space $\mathcal{S}$, where each element $b\left(s\big|\mathbf{F}_\mathds{U}\right)$ denotes the probability that the system is in state $s$. This POSSP belief state reflects all possible realizations of the entries $\mathbf{x}$ in the SFM, as shown in \fref{fig:LF-example}. Since each of these $\mathbf{x}$ entries arises at element $f_{ij}$ of the SFM when the sender targets receiver $i$ with packet $j$ in a coded recovery transmission and does not hear a feedback from it, it could be equal to $\{1,-1\}$ if $i$ did not receive $j$ or $0$ otherwise. Consequently, the states with non-zero values in $\mathbf{b}\left(\mathbf{F}_\mathds{U}\right)$ are those representing all possible combinations of replacing each $\mathbf{x}$ entry of $\mathbf{F}_\mathds{U}$ by 0 (assuming it is received) or $\{1,-1\}$ (assuming it is not received). The number of such states is equal to $2^{\left(\sum_{i=1}^M \left|\mathcal{U}_i\right|\right)}$.

\subsection{Maximum Likelihood State} \label{sec:belief}
Since the only difference between the perfect and lossy formulations is the uncertainty introduced by unheard feedback events, the same solution proposed in \sref{sec:perfect-formulation} can be adapted to solve the POSSP problem, if we can find a good estimate of the SFM (and thus the IDNC graph) in these events. In this stochastic partially observable domain, the best state estimate of the SFM and the IDNC graph is the one representing the maximum likelihood (ML) state of the network, i.e. the state that has the highest probability in the belief state, defined as follows:
\begin{equation}\label{eq:ML-rule}
s^* = \arg\max_{s\in\mathcal{S}}\left\{b\left(s\big|\mathbf{F}_\mathds{U}\right)\right\}\;.
\end{equation}

To derive the expression for this ML state, we define $\lambda_{i}(s) \subset \mathcal{U}_i$ and $\overline{\lambda}_{i}(s) = \mathcal{U}_i\setminus \lambda_{i}(s)$ as the sets of $\mathbf{x}$ entries in the $i$-th row of the uncertain SFM $\mathbf{F}_\mathds{U}$ that need to be replaced by $\{1,-1\}$ and zeros, respectively, in order to reach a given realization $\mathbf{F}(s)$ of state $s$. Also, let $\theta_{ij}$ be the number of attempts (i.e. times) the sender targeted receiver $i$ with packet $j$ after the last time the sender heard a feedback from this receiver. Given these definitions, the following theorem introduces the expression for this ML state.
\begin{theorem}\label{th:ML}
The ML state of the network can be determined by the sender as:
\begin{equation}\label{eq:ML}
s^* =  \arg\max_{s\in\mathcal{S}}\left\{\prod_{i\in\mathcal{M}}\left( \prod_{j\in\lambda_i(s)} \mathcal{P}^L_{ij} \cdot \prod_{j\in\overline{\lambda}_i(s)} \mathcal{P}^R_{ij}\right)\right\}\;,
\end{equation}
where
\begin{equation}\label{eq:ML-individual}
\mathcal{P}^L_{ij} = \left(\frac{p_i}{p_i+\overline{p}_iq_i}\right)^{\theta_{ij}}\qquad\mbox{and}\qquad
\mathcal{P}^R_{ij} = 1-\mathcal{P}^L_{ij}\;.
\end{equation}
\end{theorem}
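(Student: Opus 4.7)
The plan is to derive an explicit factorized expression for the belief $b(s\mid\mathbf{F}_\mathds{U})$ and then read off its maximizer. The factorization I target is a product, over every uncertain entry $(i,j)$ in $\mathbf{F}_\mathds{U}$, of the posterior probability that the hidden true entry takes the value dictated by state $s$. Once such a factorization is in hand, substituting it into \eqref{eq:ML-rule} immediately yields \eqref{eq:ML}.

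First, I would fix a single uncertain entry $(i,j)$ and consider the $\theta_{ij}$ transmissions in which the sender targeted receiver $i$ with packet $j$ since the last heard feedback from $i$. In any one such slot, exactly one of three elementary outcomes can occur: a forward erasure (probability $p_i$), successful forward reception followed by loss of the ACK on the reverse link (probability $\overline{p}_i q_i$), or successful forward reception followed by ACK delivery (probability $\overline{p}_i\overline{q}_i$). The third outcome is ruled out by the very definition of $\theta_{ij}$, since such an ACK would have been heard. Conditioning on the event ``no feedback heard in this slot'', Bayes' rule yields
\[
\Pr(\text{$i$ does not receive $j$ in this slot}\mid\text{no feedback})=\frac{p_i}{p_i+\overline{p}_i q_i}.
\]
Invoking independence of the forward and reverse channel erasures across distinct slots, the probability that $i$ never received $j$ throughout all $\theta_{ij}$ slots --- i.e.\ the true entry $f_{ij}\in\{-1,1\}$, or equivalently $j\in\lambda_i(s)$ --- is the product of $\theta_{ij}$ identical terms, which is exactly $\mathcal{P}^L_{ij}$. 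The complementary event ($j\in\overline{\lambda}_i(s)$) then has probability $\mathcal{P}^R_{ij}=1-\mathcal{P}^L_{ij}$.

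Second, I would combine these per-entry posteriors into a single joint posterior. For entries belonging to different receivers, independence is immediate because the forward and reverse channels of distinct receivers are independent by assumption. For two entries $(i,j_1)$ and $(i,j_2)$ of the same receiver, the key fact is that an IDNC transmission targets each receiver with at most one new packet, so the $\theta_{ij_1}$ slots that produced the first uncertainty and the $\theta_{ij_2}$ slots that produced the second are disjoint; memoryless erasures on disjoint slots are independent. Taking the product of the per-entry posteriors over all uncertain entries then reproduces the expression appearing inside the $\arg\max$ in \eqref{eq:ML}.

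The main obstacle, and the step I would be most careful about, is the conditional-independence argument inside a single $(i,j)$ block. Although the underlying channel erasures are memoryless, the conditioning event ``no feedback heard across all $\theta_{ij}$ slots'' is itself a joint event and must be shown to factorize slot-by-slot. This factorization holds because each slot's observation (whether the sender hears an ACK from $i$) is a deterministic function of that slot's forward and reverse channel outcomes alone, and those outcomes are independent across slots. Once this factorization is verified, the per-slot Bayesian update composes cleanly into the geometric formula for $\mathcal{P}^L_{ij}$, and the full product expression for $s^*$ follows.
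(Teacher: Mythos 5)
Your proposal is correct and follows essentially the same route as the paper: factorize the belief $b\left(s\big|\mathbf{F}_\mathds{U}\right)$ over the uncertain entries (using independence across receivers and across the disjoint attempt slots) and compute each per-entry posterior by Bayesian conditioning on the unheard-feedback events, then substitute into \eqref{eq:ML-rule}. The only difference is cosmetic: the paper obtains $\mathcal{P}^L_{ij}$ by normalizing the joint likelihoods $\left(p_i\right)^{\theta_{ij}}$ and a binomial sum over the $\theta_{ij}$ attempts and simplifying via the binomial theorem, whereas you apply Bayes' rule slot-by-slot and multiply the $\theta_{ij}$ identical conditional factors, which yields the same geometric expression $\left(\frac{p_i}{p_i+\overline{p}_iq_i}\right)^{\theta_{ij}}$ under the paper's same independence assumptions.
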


\begin{proof}
As can be inferred from \eqref{eq:ML-rule}, we need to derive an expression for the different entries $b\left(s\big|\mathbf{F}_\mathds{U}\right)$ (i.e. the probability distribution over the POSSP states) of the belief state in order to find the ML state of the network. Let $t_i$ be the last timeslot at which the sender heard a feedback from receiver $i$. Also, let $\mathcal{P}^R_{ij}$ $\left(\mathcal{P}^L_{ij}\right)$ be the probability that an uncertain packet $j$ of receiver $i$ has been actually received (not received), given the event of unheard feedback from receiver $i$ since time $t_i$. Finally, define $s_a$ as the actual hidden state of the network. Thus, we can express $b\left(s\big|\mathbf{F}_\mathds{U}\right)$ as:
\begin{singlecol}
\begin{equation} \label{eq:belief-computation}
b\left(s\big|\mathbf{F}_\mathds{U}\right) = \mathds{P}\left\{s_a = s \mid \mathbf{F}_\mathds{U}\right\} = \prod_{i\in\mathcal{M}} \left(\prod_{j\in\lambda_i(s)} \mathcal{P}^L_{ij} \cdot \prod_{j\in\overline{\lambda}_i(s)} \mathcal{P}^R_{ij} \right)\;.
\end{equation}
\end{singlecol}
\begin{doublecol}
\begin{align} \label{eq:belief-computation}
b\left(s\big|\mathbf{F}_\mathds{U}\right) &= \mathds{P}\left\{s_a = s \mid \mathbf{F}_\mathds{U}\right\} \nonumber \\
& = \prod_{i\in\mathcal{M}} \left(\prod_{j\in\lambda_i(s)} \mathcal{P}^L_{ij} \cdot \prod_{j\in\overline{\lambda}_i(s)} \mathcal{P}^R_{ij} \right)\;.
\end{align}
\end{doublecol}

From \eqref{eq:belief-computation}, we can define $s^*$ as the state satisfying the ML estimation, such that:
\begin{equation}\label{eq:ML-state}
s^* = \ignore{\arg\max_{s\in\mathcal{S}}\left\{b(s\big|\mathbf{F}_\mathds{U}\right)\right\} =} \arg\max_{s\in\mathcal{S}}\left\{\prod_{i\in\mathcal{M}} \left(\prod_{j\in\lambda_i(s)} \mathcal{P}^L_{ij} \cdot \prod_{j\in\overline{\lambda}_i(s)} \mathcal{P}^R_{ij} \right)\right\}\;.
\end{equation}
To find the state maximizing the right hand side, we need to derive expressions for $\mathcal{P}^L_{ij}$ and $\mathcal{P}^R_{ij}$. An unheard feedback occurrence from a targeted receiver $i$ in any give transmission could mean one of the two following events:
\begin{enumerate}
\item Receiver $i$ did not receive the packet and thus did not issue a feedback.
\item Receiver $i$ received the packet and issued a feedback packet, which did not arrive at the sender.
\end{enumerate}
These events can occur with probabilities $p_i$ and $\overline{p}_iq_i$, respectively. Note that unheard feedback events are independent from each other even for the different attempts of the same packet to the same receiver. They are also independent from the transmitted packet to receiver $i$ and independent from one receiver to another. Given the definition of $\theta_{ij}$ as the number of times the sender targeted receiver $i$ with packet $j$ after time $t_i$, the probability that an $\mathbf{x}$ entry at position $f_{ij}$ of $\mathbf{F}_\mathds{U}$ is in fact $\{1,-1\}$ (i.e. packet $j$ is actually not received at receiver $i$) is equal to the probability that this packet was lost on the forward channel for each of the $\theta_{ij}$ attempts. This happens with probability $\left(p_i\right)^{\theta_{ij}}$. On the other hand, this uncertain entry $f_{ij}$ will be 0 (i.e. packet $j$ is actually received by receiver $i$) if this receiver has successfully decoded it in any one, several or all attempts out of the $\theta_{ij}$ attempts. This occurs with probability $\sum_{k=0}^{\theta_{ij}-1} \binom{\theta_{ij}}{k} \left(p_i\right)^k\left[\overline{p}_iq_i\right]^{\theta_{ij}-k}$. Consequently, given $\theta_{ij}$ attempts of packet $j$ to receiver $i$ with no feedback from it on any of them, $\mathcal{P}^L_{ij}$ and $\mathcal{P}^R_{ij}$ can be expressed as:
\begin{align}
\mathcal{P}^L_{ij} &= \frac{\left(p_i\right)^{\theta_{ij}}}{\left(p_i\right)^{\theta_{ij}} + \sum_{k=1}^{\theta_{ij}-1} \binom{\theta_{ij}}{k} \left(p_i\right)^k\left[\overline{p}_iq_i\right]^{\theta_{ij}-k}} \nonumber \\
& = \left(\frac{p_i}{p_i + \overline{p}_iq_i}\right)^{\theta_{ij}}\;, \label{eq:ML-Loss} \\
&\quad \nonumber\\
\mathcal{P}^R_{ij} &= \frac{\sum_{k=1}^{\theta_{ij}-1} \binom{\theta_{ij}}{k} \left(p_i\right)^k\left[\overline{p}_iq_i\right]^{\theta_{ij}-k}}{\left(p_i + \overline{p}_iq_i\right)^{\theta_{ij}}} \nonumber\\
&= 1 - \mathcal{P}^L_{ij} \;. \label{eq:ML-Receive}
\end{align}
\end{proof}

In case of reciprocal channels, in which $p_i = q_i = p_i$, the ML expression can be simplified as defined by the following corollary.
\begin{corollary}\label{th:ML-reciprocal}
For reciprocal channels, the ML state of the network can be determined by the sender using \eqref{eq:ML}, where:
\begin{equation}\label{eq:ML-reciprocal-individual}
\mathcal{P}^L_{ij} = \left(\frac{1}{2-p_i}\right)^{\theta_{ij}} \qquad \mbox{and} \qquad
\mathcal{P}^R_{ij} = 1-\mathds{P}^L_{ij}\;.
\end{equation}
\end{corollary}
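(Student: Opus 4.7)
The plan is to derive Corollary 1 as a direct specialization of Theorem 1. The corollary's hypothesis only restricts the numerical relationship between the forward and reverse channel parameters (reciprocity, $q_i = p_i$) without altering the underlying POSSP formulation, the definitions of $\lambda_i(s)$, $\overline{\lambda}_i(s)$, $\theta_{ij}$, or the independence structure of unheard feedback events. Therefore, the ML decision rule in equation \eqref{eq:ML} carries over verbatim, and the only work needed is to re-express the per-packet probabilities $\mathcal{P}^L_{ij}$ and $\mathcal{P}^R_{ij}$ from Theorem 1 under the reciprocity assumption.

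First I would start from the expression $\mathcal{P}^L_{ij} = \left(\frac{p_i}{p_i + \overline{p}_i q_i}\right)^{\theta_{ij}}$ given in \eqref{eq:ML-individual} and substitute $q_i = p_i$ into the denominator, obtaining $p_i + \overline{p}_i p_i = p_i(1 + \overline{p}_i)$. Cancelling the common factor $p_i$ in numerator and denominator leaves $\left(\frac{1}{1+\overline{p}_i}\right)^{\theta_{ij}}$, and since $\overline{p}_i = 1 - p_i$, this reduces to $\left(\frac{1}{2 - p_i}\right)^{\theta_{ij}}$, which is exactly the expression claimed in \eqref{eq:ML-reciprocal-individual}. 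The companion identity $\mathcal{P}^R_{ij} = 1 - \mathcal{P}^L_{ij}$ is inherited unchanged from Theorem 1, as it encodes the fact that, conditional on the unheard feedback event, the two outcomes (packet was actually received at some point versus packet was never received across all $\theta_{ij}$ attempts) partition the conditional probability space.

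Since this reduction is a one-line algebraic simplification, I do not anticipate any substantive obstacle. The only point that warrants a brief comment is confirming that the independence arguments invoked in the proof of Theorem 1, namely the independence of unheard feedback events across attempts, receivers, and transmitted packets, remain valid when $q_i = p_i$; this is immediate since reciprocity is merely a numerical equality on the erasure parameters and does not couple the forward and reverse channel events. Consequently, the corollary follows by direct substitution, and no further probabilistic reasoning is required beyond what is already established in the proof of Theorem 1.
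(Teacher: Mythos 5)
Your proposal is correct and matches the paper's own argument, which likewise obtains the corollary by substituting $q_i = p_i$ into the analysis leading to \eqref{eq:ML-Loss} and \eqref{eq:ML-Receive}; your explicit simplification $\frac{p_i}{p_i+\overline{p}_i p_i} = \frac{1}{2-p_i}$ is exactly the intended computation. Nothing further is needed.
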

\begin{proof}
The proof can be easily obtained by a simple substitution of $q_i = p_i$ in the analysis to derive \eqref{eq:ML-Loss} and \eqref{eq:ML-Receive}.
\end{proof}

\subsection{Complexity and Simple ML Rules}
The question now is whether the sender needs to compute $2^{\left(\sum_{i=1}^M \left|\mathcal{U}_i\right|\right)}$ probabilities before each transmission in order to estimate the ML state of the network. The expressions in \eqref{eq:ML}, \eqref{eq:ML-individual} and \eqref{eq:ML-reciprocal-individual} in Theorem \ref{th:ML} and Corollary \ref{th:ML-reciprocal} show that the problem of estimating the ML state does not really need to go through all these computations after each transmission. These equations show that the independence of unheard feedback events for each receiver and between different receivers, as well as the independence of loss events on the forward and reverse channels, allow progressive computation of the ML state estimate of the network after each transmission as follows. When the sender experiences an unheard feedback event from receiver $i$, after a given transmission targeting this receiver with packet $j$, it can easily compute the individual ML state of this packet by computing $\mathcal{P}^L_{ij}$. If this probability is larger than 0.5, then it can set the corresponding entry $f_{ij}$ to 1 or -1 (according to whether it is wanted or not by receiver $i$, respectively). Otherwise, it sets $f_{ij}$ to zero. This update can be also simply done directly to the IDNC graph by keeping or eliminating the vertice $v_{ij}$ from the graph if $\mathcal{P}^L_{ij}$ is larger or smaller that 0.5, respectively.

Once an estimate is made for the entry $f_{ij} = \mathbf{x}$ and $v_{ij}$ after this transmission, their state need not be computed again for any subsequent transmission that does not involve them. In other words, if the sender does not target receiver $i$ with packet $j$ after this ML state estimate has been made, and if it did not receive any other feedback from this receiver, then it does not need to make any further estimates about this $f_{ij} = \mathbf{x}$ entry. This allows the progressive construction of the ML state and avoids unnecessary computations for the entries and receivers with unchanged status.

In addition to the complexity of the above progressive ML state estimation after each transmission, the sender needs to store the variables $\theta_{ij}$, which require $O(MN)$ entries in its memory. Finally note that, once a valid feedback is received from receiver $i$, all its entries $f_{ij}$ $\forall~j\in\mathcal{U}_i$ can be updated in the SFM, its vertices can be eliminated or re-inserted in the graph accordingly, and the values of $\theta_{ij}$ of this receiver are all reset.

The following corollaries can be used to further simplify the computation of the ML estimates for the packets.

\begin{corollary}\label{th:ML-simple}
For any receiver $i$, if any packet is attempted $n$ times after time $t_i$ ($n\geq 1$), the sender can set this packet as not received if:
 \begin{equation}\label{eq:MLbound}
 \frac{\overline{p}_iq_i}{p_i} \leq 2^{1/n}-1 \;.
 \end{equation}
 Otherwise, it can be set as received. For $n=1$ (i.e. one attempt), the packet can directly be set as not received if \ignore{ $p_i > \frac{q_i}{1+q_i} > 0~\forall~i\in\mathcal{M}$} $p_i > q_i$.
\end{corollary}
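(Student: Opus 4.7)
The plan is to show that the stated bound is simply the algebraic rewriting of the maximum-likelihood threshold test $\mathcal{P}^L_{ij} \geq 1/2$, applied to the closed-form expression already derived in Theorem~\ref{th:ML}. The subsection preceding the corollary establishes the simple decoding rule used by the sender: after each unheard feedback, the uncertain entry $f_{ij}$ is declared as ``not received'' precisely when $\mathcal{P}^L_{ij} \geq 1/2$, and as ``received'' otherwise. So the task reduces to translating this threshold test on $\mathcal{P}^L_{ij}$ into a condition on the underlying channel parameters $p_i,q_i$ for $\theta_{ij}=n$.

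First I would substitute $\theta_{ij} = n$ into expression \eqref{eq:ML-individual}, which turns the threshold condition into
\begin{equation*}
\left(\frac{p_i}{p_i + \overline{p}_i q_i}\right)^{\!n} \;\geq\; \frac{1}{2}\;.
\end{equation*}
Next, I would observe that the physical ranges $p_i \in (0,1]$ and $q_i \in [0,1]$ guarantee $p_i + \overline{p}_i q_i > 0$, so both sides are strictly positive and the inequality can be inverted to give $\big((p_i + \overline{p}_i q_i)/p_i\big)^n \leq 2$. Taking the $n$-th root (a monotone operation on the positive reals) and subtracting one from both sides then yields exactly $\overline{p}_i q_i / p_i \leq 2^{1/n} - 1$, matching \eqref{eq:MLbound}. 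Under the opposite inequality the threshold fails and the packet is set as received, which gives the full dichotomy stated in the corollary.

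For the single-attempt case, I would simply plug $n=1$ into the derived bound, reducing it to $\overline{p}_i q_i \leq p_i$, or equivalently $p_i \geq q_i/(1+q_i)$. Since $q_i/(1+q_i) < q_i$ whenever $q_i>0$, the paper's standing assumption $p_i > q_i$ (justified in Section~\ref{sec:model} by the disparities in packet size, protection level, and interference on the forward versus reverse link) is in fact strictly stronger, so the threshold is always met after a single missed feedback and the corresponding uncertain packet is unconditionally set to ``not received''. There is no genuine obstacle in this derivation; the only care-point is the monotonicity justification when inverting and taking $n$-th roots, which is immediate under the physical ranges of $p_i$ and $q_i$.
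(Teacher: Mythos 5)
Your proof is correct and follows essentially the same route as the paper: both start from the ML threshold test $\mathcal{P}^L_{ij}\geq 1/2$ with $\theta_{ij}=n$ in \eqref{eq:ML-individual} and rearrange algebraically to obtain \eqref{eq:MLbound}, and both deduce the $n=1$ case from the standing assumption $p_i > q_i$ (your rewriting $p_i \geq q_i/(1+q_i)$ is just an equivalent rearrangement of the paper's $q_i \leq p_i/\overline{p}_i$).
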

\begin{proof}
From \eqref{eq:ML-Loss}, the ML state of a packet that is attempted $n$ times since $t_i$ can be computed as follows:
\begin{align*}
&\left(\frac{p_i}{p_i+\overline{p}_iq_i}\right)^n \geq 0.5 \\
\Rightarrow \quad & 2^{1/n}p_i \geq p_i+\overline{p}_iq_i  \\
\Rightarrow \quad & \frac{\overline{p}_iq_i}{p_i} \leq 2^{1/n}-1\;.
\end{align*}
In case of $n=1$, the decision rule becomes:
\begin{equation*}
\frac{\overline{p}_iq_i}{p_i} \leq 1 \quad \Rightarrow \quad q_i \leq \frac{p_i}{\overline{p}_i}\;.
\end{equation*}
If $p_i > q_i$, then $\frac{p_i}{\overline{p}_i} > q_i$ and the corollary follows.
\end{proof}

\begin{corollary}\label{th:ML-simple-reciprocal}
For reciprocal channels, the sender can set all attempted packets with $n=1$ for any receiver $i$ as not received, regardless of the value of $p_i$. For larger values of attempts (i.e. $n\geq 2$) for any packet, it can be set as not received if:
\begin{equation} \label{eq:MLbound-receiprocal}
 \overline{p}_i \leq 2^{1/n}-1\;,
\end{equation}
and set as received otherwise.
\end{corollary}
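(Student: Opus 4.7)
The plan is to obtain Corollary \ref{th:ML-simple-reciprocal} as a direct specialization of Corollary \ref{th:ML-simple} by setting $q_i = p_i$. First I would invoke the ML decision rule from Corollary \ref{th:ML-simple}, which states that after $n$ attempts without heard feedback, the packet is declared not received when
\begin{equation*}
\frac{\overline{p}_i q_i}{p_i} \leq 2^{1/n} - 1.
\end{equation*}
Substituting $q_i = p_i$ reduces the left-hand side to $\overline{p}_i$, yielding the condition $\overline{p}_i \leq 2^{1/n} - 1$ stated in \eqref{eq:MLbound-receiprocal}, which handles the case $n \geq 2$ immediately.

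Next I would separately treat the base case $n = 1$. Here $2^{1/n} - 1 = 1$, so the inequality becomes $\overline{p}_i \leq 1$, which is trivially satisfied for any erasure probability $p_i \in [0,1]$. Hence the packet is always declared not received on a single unacknowledged attempt over reciprocal channels, independently of the channel quality. This matches the first assertion of the corollary.

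I expect no real obstacle: the result is purely an algebraic specialization of the previous corollary, and the independence arguments justifying the progressive ML update were already established in the proof of Theorem \ref{th:ML}. The only minor care needed is to check the monotonicity of $2^{1/n} - 1$ as $n$ grows, which confirms that the threshold $\overline{p}_i \leq 2^{1/n} - 1$ becomes progressively tighter with more attempts, consistent with the intuition that repeated unheard feedback increasingly supports the hypothesis of forward-link loss. A one-line remark to this effect, followed by the substitution, should suffice to close the proof.
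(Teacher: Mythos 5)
Your proposal is correct and follows essentially the same route as the paper: the $n\geq 2$ case is obtained by substituting $q_i=p_i$ into \eqref{eq:MLbound}, exactly as the paper does, and your $n=1$ argument (the threshold $2^{1/1}-1=1$ makes $\overline{p}_i\leq 1$ trivially true) is algebraically equivalent to the paper's observation that $\frac{1}{2-p_i}\geq 0.5$ for all $p_i$. No gaps worth noting.
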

\begin{proof}
The first statement follows directly from the fact that $\frac{1}{2-p_i}$ is always greater than or equal to 0.5 for any value of $p_i$. The second statement follows directly by setting $q_i = p_i$ in \eqref{eq:MLbound}\ignore{ and re-arranging}.
\end{proof}

Clearly, the ML state estimation of the uncertain packets attempted only once requires no computations at all at the sender for both reciprocal channels and non-reciprocal channels with forward erasure probabilities larger than the reverse erasure probabilities (which is typically the case in wireless networks as explained earlier). For more than one attempt, we can see from both expressions in \eqref{eq:MLbound} and \eqref{eq:MLbound-receiprocal} that the left hand-side is a simple function of the erasure probabilities, which can be computed once per receiver. On the other hand, the right hand-side of both equations is a threshold function of $n$ only. Thus, the sender can construct an offline table for all the values of $2^{1/n}-1$\ignore{ $\left(\mbox{or}~2-2^{1/n}\right)$} for the general or reciprocal case, indexed by the variable $n\geq 2$, and store it in its memory. When an unheard feedback event occur from receiver $i$ for a given packet, the sender needs to compare the erasure probability function $\frac{\overline{p}_iq_i}{p_i}$ (or $\overline{p}_i$) in the general (or reciprocal) case to the entry in this table corresponding to the number of attempts of that packet, and determine the ML state estimate of this packet accordingly\ignore{to make the decision on updating the SFM and IDNC graph accordingly}.

Finally, it is important to note that the threshold function for both the non-reciprocal and reciprocal channel case is a monotonically decreasing function of $n$. Hence, once the threshold function drops below the erasure probability function for a given packet attempted $n$ times, which means that it is most likely to be received, the ML state for this uncertain packet will always remain the same for all future values of $n$ as long as $p_i$ and $q_i$ are not changed. Consequently, once the sender makes an estimate that a packet is most likely to be received, this estimate will never change and thus the sender need not compute it any further for this packet.

\subsection{Partially Blind Algorithm based on ML Graph Updates} \label{sec:blind-update}
Having derived the ML decision rules, we can design our proposed partially blind algorithm to solve the completion delay problem in lossy feedback environment. When the sender has uncertain packets in its SFM, due to unheard feedback events, it can employ the above ML state estimation rules to update the IDNC graph as follows. If the packet $j$ of vertex $v_{ij}$ is most likely to have been received at receiver $i$ (according to the ML rule derived in the previous section), vertex $v_{ij}$ is made hidden inside the IDNC graph, which means that it is temporarily not considered for transmission. Otherwise it is kept in the graph as an active vertex considered for any subsequent transmission. To minimize the completion delay, the sender selects the IDNC packet for each transmission by applying the maximum weight clique approach described in \sref{sec:perfect-formulation} on this estimated graph.

The resulting hidden vertices of any given receiver are treated according to what happens later:
\begin{itemize}
\item If the sender receives a feedback from this receiver, it will know its actual state and can update the status of these hidden vertices. Vertices representing received packets are eliminated and those representing lost packets are brought back as active (i.e. non-hidden) vertices in the graph.
\item If a receiver has no more active vertices in the graph but still has hidden ones, all these vertices are brought back as active vertices and are re-attempted within subsequent IDNC packets until a feedback is received from this receiver.\ignore{ If later a feedback comes from this receiver, before all these re-activated vertices are attempted, their actual status is updated similar to the above case (i.e. removed if received and kept as active if lost).}
\end{itemize}

As clarified above, the proposed solution for the completion delay problem is a maximum weight clique problem over the ML estimate of the IDNC graph. However, it is well-known that the maximum weight clique problem is NP-hard to solve and approximate \cite{Garey1979,Ausiello1999}. Consequently, we propose to use the same heuristic algorithm proposed in \cite{ICC10,TON10-CD} to solve the completion delay problem\ignore{ in perfect feedback scenarios}. This algorithm consists of a maximum weight vertex search algorithm, with the difference that the weight $\psi_i = \left(\frac{|\mathcal{W}_i|}{\overline{p}_i}\right)^m$ of each vertex $v_{ij}$ is modified to:
\begin{equation}
\widetilde{\psi}_{ij} = \psi_{i} \cdot \sum_{v_{kl}\in\mathcal{A}(v_{ij})} \psi_k\;,
\end{equation}
where $\mathcal{A}(v_{ij})$ is the set of vertices in the graph that are adjacent to vertex $v_{ij}$. Consequently, these new weights reflect, not only the weight of the vertex, but also the weights of the vertices adjacent to it.\ignore{ This is done using the following procedure:} Thus, when these modified weights are employed, each iteration of the maximum weight vertex search algorithm will select the vertex that has both high initial weight and strong adjacency to high initial weight vertices, which was shown to achieve near optimal performance in \cite{TON10-CD}. The complexity of this algorithm has been proved to be $O\left(M^2N\right)$ in \cite{TON10-CD}.

\section{Decoding Delay Problem} \label{sec:DD-extension}
In this section, we aim to study the effect of having feedback losses on the formulation of the decoding delay minimization problem in IDNC, and propose a new solution for this problem accordingly. As per its definition, the decoding delay increments occur after each transmission for the receivers that do not decode a new wanted packet from it. Consequently, we can derive an expression for the expected decoding delay increments in the presence of the uncertainties caused by unheard feedback occurrences, and then design a partially blind IDNC algorithm that minimizes this expression.\ignore{ Thus, we will start by deriving this expression in the next subsection.}

\subsection{Decoding Delay Expression in Lossy Feedback Environment}
Let $\kappa$ be the clique chosen for transmission at the current timeslot $t$, and let $d_i\left(\kappa\right)$ be the decoding delay increment experienced by receiver $i$ after this transmission. We also define the following sets:
\begin{itemize}
\item $\mathcal{O}$ is the set of outstanding receivers that are perceived by the sender to have non-empty Wants sets,
\item $\mathcal{F}\subseteq \mathcal{O}$ is the set of fully uncertain receivers, which includes any receiver $i$ having $\mathcal{W}_i\setminus\mathcal{U}_i = \emptyset$. In other words, this set includes any receiver $i$ that was targeted by the sender with all the packets remaining in its Wants set after time $t_i$ (i.e. the last time a feedback was heard from it).
\item $\nu(\kappa)\subseteq \mathcal{O}$ is the set of receivers that are not targeted with a primary packet in $\kappa$.
\item $\tau_n(\kappa)\subseteq \mathcal{O}$ is the set of receivers that are targeted in $\kappa$ with a new primary packet (i.e an un-attempted primary packet after their last heard feedback).
\item $\tau_u\left(\kappa\right)\subseteq \mathcal{O}$ is the set of receivers that are targeted in $\kappa$ with one of their uncertain primary packets (i.e previously attempted primary packets after the last heard feedback from these receivers).
\end{itemize}
Let $D\left(\kappa\right)$ be sum of all decoding delay increments of all receivers. Since only receivers in $\mathcal{O}$ may experience increments in decoding delay, we can write that $D\left(\kappa\right) = \sum_{i\in\mathcal{O}} d_i(\kappa)$. In the rest of this section, we only consider receivers in $\mathcal{O}$ in all derivations and formulae.

Defining $j_\kappa$ as the primary packet that receiver $i \in\left(\tau_n(\kappa)\cup\tau_u(\kappa)\right)$ can decode from the transmission of $\kappa$, the following theorem presents an expression for the expected sum decoding delay increments after this transmission.
\begin{theorem}\label{th:decoding-delay}
The expected sum decoding delay increment after the transmission of $\kappa$ at time $t>t_i~\forall~$ is:
\begin{equation}\label{eq:decoding-delay}
\mathds{E}[D(\kappa)] = \sum_{i\in \nu(\kappa)}\overline{p}_i
+ \sum_{i\in \tau_u(\kappa)}\left[\overline{p}_i\mathcal{P}_{ij_\kappa}^R\right] - \sum_{i \in \mathcal{F}} \left[\overline{p}_i\cdot \prod_{h\in\mathcal{W}_i}\mathcal{P}_{ih}^R\right]\;.
\ignore{
\mathbb{E}[\mathcal{D}(\kappa)] & =  \sum_{i \in \nu \left(\kappa\right)} \overline{p}_i - \sum_{i \in \mathcal{F}\cap \:\nu \left(\kappa\right)} \left[\overline{p}_i\cdot\prod_{j \in \mathcal{W}_i}\mathcal{P}^R_{ij}\right]  \nonumber\\
&+ \sum_{i \in \tau_u(\kappa)} \left[\overline{p}_i \mathcal{P}^R_{ij_\kappa}\right] - \sum_{i \in \mathcal{F}\cap\:\tau_u(\kappa)}\left[\overline{p}_i \mathcal{P}^R_{ij_\kappa}\cdot\prod_{j \in \mathcal{W}_i}\mathcal{P}^R_{ij}\right]\;.
}
\end{equation}
\end{theorem}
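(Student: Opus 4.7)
The plan is to use linearity of expectation: since $D(\kappa)=\sum_{i\in\mathcal{O}} d_i(\kappa)$ and each $d_i(\kappa)\in\{0,1\}$, one has $\mathds{E}[D(\kappa)]=\sum_{i\in\mathcal{O}}\mathds{P}\{d_i(\kappa)=1\}$. I would then partition $\mathcal{O}$ into the three disjoint sets $\nu(\kappa)$, $\tau_n(\kappa)$, $\tau_u(\kappa)$ (every outstanding receiver is either not targeted by a primary packet in $\kappa$, or targeted by a new primary packet, or targeted by an uncertain primary packet), and compute the delay probability within each class. A key observation I would use throughout is the independence, established in the proof of \thref{th:ML}, of the loss/reception events across different $(i,j)$ pairs with uncertain status, so that joint reception probabilities factorize as products of the $\mathcal{P}^R_{ij}$.

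For each class, I would first compute the probability conditional on the receiver having a non-empty actual Wants set (which is required by the definition of decoding delay), and then adjust. For $i\in\tau_n(\kappa)$, the packet $j_\kappa$ is known to be wanted and not yet received, so on channel success the receiver always decodes something new; hence $\mathds{P}\{d_i=1\}=0$. For $i\in\nu(\kappa)$, on channel success (probability $\overline{p}_i$) the receiver obtains no new wanted packet, so it incurs unit delay provided its real Wants set is non-empty. For $i\in\tau_u(\kappa)$, unit delay occurs precisely when (a) the channel succeeds, (b) $j_\kappa$ had in fact already been received on an earlier attempt (probability $\mathcal{P}^R_{ij_\kappa}$), and (c) the real Wants set still contains at least one undelivered packet.

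The next step is to handle the ``non-empty real Wants set'' condition. If $i\notin\mathcal{F}$, then $\mathcal{W}_i\setminus\mathcal{U}_i\neq\emptyset$, so at least one wanted packet is known to be missing and the real Wants set is non-empty with probability one; the contributions reduce to $\overline{p}_i$ and $\overline{p}_i\mathcal{P}^R_{ij_\kappa}$, respectively, matching the first two sums in \eqref{eq:decoding-delay}. If $i\in\mathcal{F}$, every wanted packet is uncertain, and by independence the real Wants set is empty with probability $\prod_{h\in\mathcal{W}_i}\mathcal{P}^R_{ih}$. For $i\in\mathcal{F}\cap\nu(\kappa)$ this subtracts $\overline{p}_i\prod_{h\in\mathcal{W}_i}\mathcal{P}^R_{ih}$; for $i\in\mathcal{F}\cap\tau_u(\kappa)$, since $j_\kappa\in\mathcal{W}_i$, the condition becomes ``$j_\kappa$ received and at least one $h\in\mathcal{W}_i\setminus\{j_\kappa\}$ not received,'' contributing $\overline{p}_i\mathcal{P}^R_{ij_\kappa}-\overline{p}_i\prod_{h\in\mathcal{W}_i}\mathcal{P}^R_{ih}$.

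To finish, I would note that $\mathcal{F}\cap\tau_n(\kappa)=\emptyset$, because a fully uncertain receiver by definition has no un-attempted primary packet available to appear as $j_\kappa$ in $\tau_n$; hence $\mathcal{F}\subseteq\nu(\kappa)\cup\tau_u(\kappa)$, and the correction term can be written as a single sum over all of $\mathcal{F}$, yielding exactly \eqref{eq:decoding-delay}. The main subtlety — and the step I would be most careful about — is the last one: one must correctly subtract the ``spurious'' delay contribution for $\mathcal{F}$-receivers whose real Wants set has already been emptied by earlier unacknowledged successes, and verify that this correction is uniformly $-\overline{p}_i\prod_{h\in\mathcal{W}_i}\mathcal{P}^R_{ih}$ in both the $\nu(\kappa)$ and $\tau_u(\kappa)$ sub-cases (a small algebraic coincidence driven by $j_\kappa\in\mathcal{W}_i$ for $i\in\tau_u(\kappa)$).
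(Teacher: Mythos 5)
Your proposal is correct and follows essentially the same route as the paper's proof: linearity of expectation over the indicator delays, the same five-case computation of $\mathds{P}\{d_i(\kappa)=1\}$ (merely organized by targeting status first rather than by the $\mathcal{F}$/$\overline{\mathcal{F}}$ split first), the same factorized joint probability $\mathcal{P}^R_{ij_\kappa}-\prod_{h\in\mathcal{W}_i}\mathcal{P}^R_{ih}$ for fully uncertain targeted receivers, and the same final regrouping, where your explicit remark that $\mathcal{F}\cap\tau_n(\kappa)=\emptyset$ is exactly what justifies the paper's step $\left(\mathcal{F}\cap\nu(\kappa)\right)\cup\left(\mathcal{F}\cap\tau_u(\kappa)\right)=\mathcal{F}$.
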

\begin{proof}
Since any $d_i(\kappa)$ in the definition of $D(\kappa)$ can be only either 0 or 1, we have:
\begin{singlecol}
\begin{equation}\label{eq:th2-proof1}
\mathds{E}[\mathcal{D}(\kappa)] = \sum_{i\in\mathcal{O}} \mathds{E}[d_i(\kappa)]
\ignore{& = \sum_{i\in\mathcal{O}}\left( 0 \times \mathds{P}\left\{d_i\left(\kappa\right) = 0\right\}  + 1 \times \mathds{P}\left\{d_i\left(\kappa\right) = 1\right\}\right)\nonumber\\}
  = \sum_{i\in\mathcal{O}} \mathds{P}\left\{d_i\left(\kappa\right) = 1\right\}\;.
\end{equation}
\end{singlecol}
\begin{doublecol}
\begin{align}\label{eq:th2-proof1}
\mathds{E}[\mathcal{D}(\kappa)] & = \sum_{i\in\mathcal{O}} \mathds{E}[d_i(\kappa)] \nonumber\\
\ignore{& = \sum_{i\in\mathcal{O}}\left( 0 \times \mathds{P}\left\{d_i\left(\kappa\right) = 0\right\}  + 1 \times \mathds{P}\left\{d_i\left(\kappa\right) = 1\right\}\right)\nonumber\\}
  & = \sum_{i\in\mathcal{O}} \mathds{P}\left\{d_i\left(\kappa\right) = 1\right\}\;.
\end{align}
\end{doublecol}
Consequently, to prove the theorem, we need to derive the probabilities of $d_i(\kappa) = 1$ for all the receivers in $\mathcal{O}$. According to the definition of decoding delay, and given the uncertainties arising from unheard feedback occurrences, any receiver $i$, perceived by the sender as having a non-empty Wants set, can be classified into one of two sets (again as perceived by the sender):\\
\textbf{Partially Uncertain Receivers}: $\overline{\mathcal{F}}=\left\{i\mid\mathcal{W}_i\setminus\mathcal{U}_i \neq \emptyset\right\}$\\
For any receiver $i$ in this set, there exist some packets in its Wants set that have never been attempted after $t_i$. For such receiver $i\in \overline{\mathcal{F}}$, we have the following possibilities:
\begin{itemize}
\item If this receiver $i \in \nu\left(\kappa\right)$, then it will have $d_i(\kappa) = 1$ if and only if it receives this transmission. Thus, we have:
\begin{equation}
\mathds{P}\left\{d_i\left(\kappa\right) = 1\right\}  = \overline{p}_i\;.
\end{equation}
\item If receiver $i \in \tau_n(\kappa)$ (i.e. $i$ is targeted by a primary packet that has not been attempted after $t_i$), then $i$ will never experience a decoding delay.
\item If receiver $i \in \tau_u\left(\kappa\right)$ (i.e. $i$ is targeted by a primary packet that has been attempted after $t_i$), then receiver $i$ will have $d_i(\kappa) = 1$ if and only if both following events are true:
\begin{itemize}
\item Receiver $i$ receives the transmission, which occurs with probability $\overline{p}_i$.
\item Receiver $i$ has received the attempted uncertain packet $j_\kappa$ in any one of its previous attempts after $t_i$, which occurs with probability $\mathcal{P}_{ij_\kappa}^R$.
\end{itemize}
Clearly, these two events are independent and we thus get:
\begin{equation}\label{eq:th2-proof2}
\mathds{P}\left\{d_i\left(\kappa\right) = 1\right\}  = \overline{p}_i \mathcal{P}_{ij_\kappa}^R\;. 
\end{equation}
\ignore{where $\mathcal{P}^{in}_{ij_\kappa}$ is referred to as the innovation probability of packet $j$ at receiver $i$, defined as the probability that packet $j_\kappa$ was not previously received by this receiver in any previous attempts since $t_i$.}
\end{itemize}
\textbf{Fully Uncertain Receivers}: $\mathcal{F}=\left\{i\mid\mathcal{W}_i\setminus\mathcal{U}_i = \emptyset\right\}$\\
\ignore{As defined earlier,} This is the set of receivers for which the sender has previously attempted all their remaining wanted packets after their last heard feedback.
For any receiver $i \in \mathcal{F}$, we have the following possibilities:
\begin{singlecol}
\begin{itemize}
\item If this receiver $i \in \nu\left(\kappa\right)$, then it will have $d_i(\kappa) = 1$ if and only if both following events are true:
    \begin{itemize}
    \item Receiver $i$ receives this transmission.
     \item There exists at least one packet $h\in\mathcal{W}_i$, which was not received by receiver $i$ in all its attempts since $t_i$. This event occurs with probability $1-\prod_{h\in\mathcal{W}_i}\mathcal{P}_{ih}^R$.
     \end{itemize}
     Since these two events are independent, we thus get:
\begin{equation}
\mathds{P}\left\{d_i\left(\kappa\right) = 1\right\}  = \overline{p}_i\left(1-\prod_{h\in\mathcal{W}_i}\mathcal{P}_{ih}^R\right)\;.
\end{equation}
\ignore{where $\mathcal{P}^c_i$ is referred to as the completion probability of receiver $i$, defined as the probability that receiver $i$ has completed the reception of all its wanted packets in previous attempts from the sender.}
\item If receiver $i \in \tau_u\left(\kappa\right)$, then it will have $d_i(\kappa) = 1$  if and only if all the following three events are true:
   \begin{itemize}
   \item E1: Receiver $i$ receives the transmission.
   \item E2: There exists at least one packet $h\in\mathcal{W}_i$, which was not received by receiver $i$ in all its attempts after $t_i$.
   \item E3: Receiver $i$ has received the attempted uncertain packet $j_\kappa\in\mathcal{W}_i$ in any one of its previous attempts after $t_i$
   \end{itemize}
   Clearly, E1 is independent from both E2 and E3. However, E2 and E3 are dependent events because, for both E2 and E3 to be simultaneously true, $j_\kappa$ must not be among the packets that were not received by receiver $i$\ignore{ for E2 to be true}. Consequently, we need to find the joint probability of these two events. Both E2 and E3 will be simultaneously true if and only if receiver $i$ has received packet $j_\kappa$ in one of its previous attempts after $t_i$ and at least one other packet $h\in\mathcal{W}_i\setminus j_\kappa$ was not previously received in any previous attempts after $t_i$. Since $j_\kappa$ and $\mathcal{W}_i\setminus j_\kappa$ are mutually exclusive sets of packets, and since the reception/loss events of mutually exclusive sets of packets are independent, this joint probability can be mathematically written as:
   \begin{align}
 \mathds{P}\ignore{\{\mbox{Both E2,~E3 = True}\}}\{\mbox{E2}\wedge\mbox{E3}\} & = \mathds{P}\{j_\kappa~\mbox{received}\} \cdot \mathds{P}\{\exists\mbox{ at least one}~h\in\mathcal{W}_i\setminus j_\kappa~\ignore{\mbox{s.t}~h=}\mbox{missing}\} \nonumber \\
 & = \mathds{P}\{j_\kappa~\mbox{received}\}\cdot\left(1-\mathds{P}\{h~\mbox{received}~\forall~h\in\mathcal{W}_i\setminus j_\kappa\}\right)\nonumber\\
 & = \mathcal{P}_{ij_\kappa}^R\left(1-\prod_{h\in\mathcal{W}_i\setminus j_\kappa}\mathcal{P}_{ih}^R\right)\ignore{\nonumber\\
 &} = \mathcal{P}_{ij_\kappa}^R-\prod_{h\in\mathcal{W}_i}\mathcal{P}_{ih}^R\;.
 \end{align}
 Consequently, we get:
\begin{equation}
\mathds{P}\left\{d_i\left(\kappa\right) = 1\right\}  = \overline{p}_i\left(\mathcal{P}_{ij_\kappa}^R-\prod_{h\in\mathcal{W}_i}\mathcal{P}_{ih}^R\right)\;.
\end{equation}
\end{itemize}
\end{singlecol}
\begin{doublecol}
\begin{itemize}
\item If this receiver $i \in \nu\left(\kappa\right)$, then it will have $d_i(\kappa) = 1$ if and only if both following events are true:
    \begin{itemize}
    \item Receiver $i$ receives this transmission.
     \item There exists at least one packet $h\in\mathcal{W}_i$, which was not received by receiver $i$ in all its attempts since $t_i$. This event occurs with probability $1-\prod_{h\in\mathcal{W}_i}\mathcal{P}_{ih}^R$.
     \end{itemize}
     Since these two events are independent, we thus get:
\begin{equation}
\mathds{P}\left\{d_i\left(\kappa\right) = 1\right\}  = \overline{p}_i\left(1-\prod_{h\in\mathcal{W}_i}\mathcal{P}_{ih}^R\right)\;.
\end{equation}
\ignore{where $\mathcal{P}^c_i$ is referred to as the completion probability of receiver $i$, defined as the probability that receiver $i$ has completed the reception of all its wanted packets in previous attempts from the sender.}
\item If receiver $i \in \tau_u\left(\kappa\right)$, then it will have $d_i(\kappa) = 1$  if and only if all the following three events are true:
   \begin{itemize}
   \item E1: Receiver $i$ receives the transmission.
   \item E2: There exists at least one packet $h\in\mathcal{W}_i$, which was not received by receiver $i$ in all its attempts after $t_i$.
   \item E3: Receiver $i$ has received the attempted uncertain packet $j_\kappa\in\mathcal{W}_i$ in any one of its previous attempts after $t_i$
   \end{itemize}
   Clearly, E1 is independent from both E2 and E3. However, E2 and E3 are dependent events because, for both E2 and E3 to be simultaneously true, $j_\kappa$ must not be among the packets that were not received by receiver $i$\ignore{ for E2 to be true}. Consequently, we need to find the joint probability of these two events. Both E2 and E3 will be simultaneously true if and only if receiver $i$ has received packet $j_\kappa$ in one of its previous attempts after $t_i$ and at least one other packet $h\in\mathcal{W}_i\setminus j_\kappa$ was not previously received in any previous attempts after $t_i$. Since $j_\kappa$ and $\mathcal{W}_i\setminus j_\kappa$ are mutually exclusive sets of packets, and since the reception/loss events of mutually exclusive sets of packets are independent, this joint probability can be mathematically written as:
   \begin{align}
 \mathds{P}\{\mbox{E2}\wedge\mbox{E3}\} & = \mathds{P}\{j_\kappa~\mbox{received}\} \nonumber\\
  &\quad\times \mathds{P}\{\exists\mbox{ at least one}~h\in\mathcal{W}_i\setminus j_\kappa~\ignore{\mbox{s.t}~h=}\mbox{missing}\} \nonumber \\
 & = \mathds{P}\{j_\kappa~\mbox{received}\}\nonumber\\
 & \quad \times \left(1-\mathds{P}\{h~\mbox{received}~\forall~h\in\mathcal{W}_i\setminus j_\kappa\}\right)\nonumber\\
 & = \mathcal{P}_{ij_\kappa}^R\left(1-\prod_{h\in\mathcal{W}_i\setminus j_\kappa}\mathcal{P}_{ih}^R\right)\nonumber\\
 & = \mathcal{P}_{ij_\kappa}^R-\prod_{h\in\mathcal{W}_i}\mathcal{P}_{ih}^R\;.
 \end{align}

 Consequently, we get:
\begin{equation}
\mathds{P}\left\{d_i\left(\kappa\right) = 1\right\}  = \overline{p}_i\left(\mathcal{P}_{ij_\kappa}^R-\prod_{h\in\mathcal{W}_i}\mathcal{P}_{ih}^R\right)\;.
\end{equation}
\end{itemize}
\end{doublecol}
Given all the previous cases, we can\ignore{ summarize the probabilities of decoding delay increments equal to 1 as follows} express $\mathds{P}\left\{d_i\left(\kappa\right) = 1 \right\}$ as:
\begin{equation}\label{eq:th2-proof3}
\mathds{P}\left\{d_i\left(\kappa\right) = 1 \right\} =
\begin{cases}
\overline{p}_i     & i \in \overline{\mathcal{F}}\cap\nu(\kappa) \\
\overline{p}_i\left(1-\prod_{h\in\mathcal{W}_i}\mathcal{P}_{ih}^R\right)     & i \in \mathcal{F}\cap\nu(\kappa) \\
0    & i \in \overline{\mathcal{F}}\cap\tau_n(\kappa) \\
\overline{p}_i \mathcal{P}_{ij_\kappa}^R    & i\in \overline{\mathcal{F}}\cap\tau_u(\kappa) \\
\overline{p}_i\left(\mathcal{P}_{ij_\kappa}^R-\prod_{h\in\mathcal{W}_i}\mathcal{P}_{ih}^R\right)   & i \in \mathcal{F}\cap\tau_u(\kappa)\:.
\end{cases}
\end{equation}
Substituting \eqref{eq:th2-proof3} in \eqref{eq:th2-proof1}, we get:
\begin{singlecol}
\begin{align}
\mathds{E}[D(\kappa)] &= \sum_{i\in \overline{\mathcal{F}}\cap\:\nu(\kappa)}\overline{p}_i +  \sum_{i\in \mathcal{F}\cap\:\nu(\kappa)} \left[\overline{p}_i\left(1-\prod_{h\in\mathcal{W}_i}\mathcal{P}_{ih}^R\right)\right] \nonumber\\
& + \sum_{i\in \overline{\mathcal{F}}\cap\:\tau_u(\kappa)}\left[\overline{p}_i \mathcal{P}_{ij_\kappa}^R\right] + \sum_{i \in \mathcal{F}\cap\:\tau_u(\kappa)} \left[\overline{p}_i\left(\mathcal{P}_{ij_\kappa}^R-\prod_{h\in\mathcal{W}_i}\mathcal{P}_{ih}^R\right)\right]\;.
\end{align}
\end{singlecol}
\begin{doublecol}
\begin{align}
\mathds{E}[D(\kappa)] &= \sum_{i\in \overline{\mathcal{F}}\cap\:\nu(\kappa)}\overline{p}_i +  \sum_{i\in \mathcal{F}\cap\:\nu(\kappa)} \left[\overline{p}_i\left(1-\prod_{h\in\mathcal{W}_i}\mathcal{P}_{ih}^R\right)\right] \nonumber\\
& + \sum_{i\in \overline{\mathcal{F}}\cap\:\tau_u(\kappa)}\left[\overline{p}_i \mathcal{P}_{ij_\kappa}^R\right] \nonumber \\
&+ \sum_{i \in \mathcal{F}\cap\:\tau_u(\kappa)} \left[\overline{p}_i\left(\mathcal{P}_{ij_\kappa}^R-\prod_{h\in\mathcal{W}_i}\mathcal{P}_{ih}^R\right)\right]\;.
\end{align}
\end{doublecol}
Now, since $\left(\overline{\mathcal{F}}\cap~\nu(\kappa)\right) \cup \left(\mathcal{F}\cap~\nu(\kappa)\right) = \nu(\kappa)$ and since $\left(\overline{\mathcal{F}}\cap~\tau_u(\kappa)\right) \cup \left(\mathcal{F}\cap~\tau_u(\kappa)\right) = \tau_u(\kappa)$, we can use these facts to group similar terms, which reduces the expression to:
\begin{align}\label{eq:th2-proof4}
\mathds{E}[D(\kappa)] &= \sum_{i\in \nu(\kappa)}\overline{p}_i -  \sum_{i\in \mathcal{F}\cap\:\nu(\kappa)} \left[\overline{p}_i\cdot\prod_{h\in\mathcal{W}_i}\mathcal{P}_{ih}^R\right] \nonumber\\
& + \sum_{i\in \tau_u(\kappa)}\left[\overline{p}_i\mathcal{P}_{ij_\kappa}^R\right] - \sum_{i \in \mathcal{F}\cap\:\tau_u(\kappa)} \left[\overline{p}_i\cdot \prod_{h\in\mathcal{W}_i}\mathcal{P}_{ih}^R\right]\;.
\end{align}
Finally, since $\left(\mathcal{F}\cap\:\nu(\kappa)\right) \cup \left(\mathcal{F}\cap\:\tau_u(\kappa)\right) = \mathcal{F}$, we get:

\begin{equation}\label{eq:th2-proof5}
\mathds{E}[D(\kappa)] = \sum_{i\in \nu(\kappa)}\overline{p}_i
+ \sum_{i\in \tau_u(\kappa)}\left[\overline{p}_i\mathcal{P}_{ij_\kappa}^R\right] - \sum_{i \in \mathcal{F}} \left[\overline{p}_i\cdot \prod_{h\in\mathcal{W}_i}\mathcal{P}_{ih}^R\right]\;.
\end{equation}
\end{proof}

The right-hand side of the sum decoding delay expression in \eqref{eq:decoding-delay} can be intuitively interpreted as follows:
\begin{itemize}
\item The first summation represents the expected decoding delay increments of non-targeted receivers, when each receiver $i\in\nu(\kappa)$ receives this transmission $\left(\mbox{with probability}~\overline{p}_i\right)$.
\item The second summation represents the expected decoding delay increments for targeted receivers with uncertain packets, when each receiver $i \in \tau_u(\kappa)$ receives this transmission $\left(\mbox{with probability}~\overline{p}_i\right)$ and it turns out that it has previously received the packet $j_\kappa$ in one or more of its previous attempts since time $t_i$ $\left(\mbox{with probability}~\mathcal{P}^R_{ij_\kappa}\right)$.
\item The last negative summation represents the fact that the decoding delays expressed in the previous two summations will not be experienced by both targeted and non-targeted receivers in $\mathcal{F}$ (i.e. receivers for which the sender has previously attempted all their remaining wanted packets after their last heard feedback), even if they receive this transmission $\left(\mbox{with probability}~\overline{p}_i~\mbox{for receiver}~i\right)$, in case they have received all of these uncertain packets in one or more of their previous attempts $\left(\mbox{with probability}~\prod_{h\in\mathcal{W}_i}\mathcal{P}_{ih}^R~\mbox{for receiver}~i\right)$.
\end{itemize}

\subsection{Decoding Delay Problem Formulation}
Given the previous result, we can now formulate the minimum decoding delay problem in lossy feedback environment in the following corollary.
\begin{corollary} \label{th:DD-formulation}
The problem of selecting an IDNC coded packet for a given transmission, which would result in the minimum expected sum decoding delay increments\ignore{ after this transmission}, is equivalent to a maximum weight clique problem over the IDNC graph, in which the weight $\omega_{ij}$ of every vertex $v_{ij}$ is set as follows:
\begin{equation}\label{eq:weights}
\omega_{ij} =
\begin{cases}
\overline{p}_i \quad & j\in\mathcal{W}_i\setminus\mathcal{U}_i\\
\overline{p}_i\left(1-\mathcal{P}^R_{ij}\right) \quad & j \in \mathcal{W}_i\cap\mathcal{U}_i \;.
\end{cases}
\end{equation}
\end{corollary}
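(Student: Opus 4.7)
The plan is to start from the closed form $\mathds{E}[D(\kappa)] = \sum_{i\in \nu(\kappa)}\overline{p}_i + \sum_{i\in \tau_u(\kappa)}\overline{p}_i\mathcal{P}_{ij_\kappa}^R - \sum_{i \in \mathcal{F}} \overline{p}_i\prod_{h\in\mathcal{W}_i}\mathcal{P}_{ih}^R$ established in Theorem~\ref{th:decoding-delay}, and massage it into an additive objective over the vertices of the chosen clique. The first step is to discard terms that do not depend on $\kappa$. The set $\mathcal{F}$ and the quantities $\mathcal{P}_{ih}^R$ are entirely determined by the current SFM (and the history of transmission attempts encoded in the $\theta_{ih}$), so the third summation is a constant with respect to the coding decision and drops out of any $\arg\min_\kappa$.

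Next I would use that $\nu(\kappa)$, $\tau_n(\kappa)$, and $\tau_u(\kappa)$ partition $\mathcal{O}$. Writing $\sum_{i\in\nu(\kappa)}\overline{p}_i = \sum_{i\in\mathcal{O}}\overline{p}_i - \sum_{i\in\tau_n(\kappa)}\overline{p}_i - \sum_{i\in\tau_u(\kappa)}\overline{p}_i$ and noting that $\sum_{i\in\mathcal{O}}\overline{p}_i$ is also $\kappa$-independent, minimizing $\mathds{E}[D(\kappa)]$ is equivalent to maximizing
\begin{equation*}
\sum_{i\in\tau_n(\kappa)}\overline{p}_i \;+\; \sum_{i\in\tau_u(\kappa)}\overline{p}_i\bigl(1-\mathcal{P}_{ij_\kappa}^R\bigr).
\end{equation*}

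The third step is to identify this objective with a vertex-weight sum over the primary portion $\kappa_\rho$ of the clique. By construction of the IDNC graph each receiver in $\tau_n(\kappa)\cup\tau_u(\kappa)$ contributes exactly one vertex $v_{ij_\kappa}$ to $\kappa_\rho$: if $j_\kappa\in\mathcal{W}_i\setminus\mathcal{U}_i$ the receiver lies in $\tau_n(\kappa)$ and its contribution is $\overline{p}_i$, matching the top branch of $\omega_{ij}$ in \eqref{eq:weights}; if $j_\kappa\in\mathcal{W}_i\cap\mathcal{U}_i$ the receiver lies in $\tau_u(\kappa)$ and its contribution is $\overline{p}_i(1-\mathcal{P}_{ij_\kappa}^R)$, matching the bottom branch. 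Secondary vertices contribute nothing to the decoding delay since they correspond to packets outside $\mathcal{W}_i$, so the objective collapses to $\sum_{v_{ij}\in\kappa_\rho}\omega_{ij}$, exactly the weight of the primary clique under the stated weights.

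Since by the IDNC graph construction any subset of packets $\{j_{\kappa,i}\}$ that can simultaneously instantly serve a set of receivers corresponds to a (primary) clique and vice versa, the feasibility set of $\kappa_\rho$ is precisely the set of cliques of the primary IDNC graph, and the maximum of $\sum_{v_{ij}\in\kappa_\rho}\omega_{ij}$ over all such cliques is by definition the maximum weight clique problem, yielding the corollary. The only subtle step I foresee is the bookkeeping in the second paragraph; once the partition identity and the one-to-one correspondence between targeted receivers and primary vertices of $\kappa_\rho$ are explicit, the remainder is algebra.
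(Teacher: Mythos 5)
Your proposal is correct and follows essentially the same route as the paper: drop the $\kappa$-independent $\mathcal{F}$-summation, add a $\kappa$-independent constant (you use $\sum_{i\in\mathcal{O}}\overline{p}_i$, the paper uses $\sum_{i\in\mathcal{M}}\overline{p}_i$, which is immaterial) to convert the minimization into maximizing $\sum_{i\in\tau_n(\kappa)}\overline{p}_i+\sum_{i\in\tau_u(\kappa)}\overline{p}_i\bigl(1-\mathcal{P}^R_{ij_\kappa}\bigr)$, and then read this off as a weighted clique objective. Your extra remark spelling out the one-to-one correspondence between targeted receivers and primary vertices of $\kappa_\rho$ (with secondary vertices contributing zero weight) just makes explicit what the paper leaves as ``clearly equivalent.''
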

\begin{proof}
From the derived expression \eqref{eq:decoding-delay} in \thref{th:decoding-delay}, the problem of selecting a clique $\kappa^*$ for a given transmission, so as to minimize the expected sum decoding delay increments after this transmission, can be expressed as:
\begin{singlecol}
\begin{align}
\kappa^* & = \arg\min_{\kappa\in\mathcal{G}} \left\{\mathds{E}\left[D(\kappa)\right]\right\} \nonumber\\
& = \arg\min_{\kappa\in\mathcal{G}} \left\{\sum_{i\in \nu(\kappa)}\overline{p}_i
+ \sum_{i\in \tau_u(\kappa)}\left[\overline{p}_i\mathcal{P}_{ij_\kappa}^R\right] - \sum_{i \in \mathcal{F}} \left[\overline{p}_i\cdot \prod_{h\in\mathcal{W}_i}\mathcal{P}_{ih}^R\right]\right\} \nonumber \\
& = \arg\max_{\kappa\in\mathcal{G}} \left\{ - \sum_{i\in \nu(\kappa)}\overline{p}_i
- \sum_{i\in \tau_u(\kappa)}\left[\overline{p}_i\mathcal{P}_{ij_\kappa}^R\right] + \sum_{i \in \mathcal{F}} \left[\overline{p}_i\cdot \prod_{h\in\mathcal{W}_i}\mathcal{P}_{ih}^R\right]\right\}\;.
\end{align}
\end{singlecol}
\begin{doublecol}
\begin{align}
\kappa^* & = \arg\min_{\kappa\in\mathcal{G}} \left\{\mathds{E}\left[D(\kappa)\right]\right\} \nonumber\\
& = \arg\min_{\kappa\in\mathcal{G}} \Bigg\{\sum_{i\in \nu(\kappa)}\overline{p}_i
+ \sum_{i\in \tau_u(\kappa)}\left[\overline{p}_i\mathcal{P}_{ij_\kappa}^R\right] \nonumber\\
& \qquad \qquad \qquad \qquad \quad - \sum_{i \in \mathcal{F}} \left[\overline{p}_i\cdot \prod_{h\in\mathcal{W}_i}\mathcal{P}_{ih}^R\right]\Bigg\} \nonumber \\
& = \arg\max_{\kappa\in\mathcal{G}} \Bigg\{ - \sum_{i\in \nu(\kappa)}\overline{p}_i
- \sum_{i\in \tau_u(\kappa)}\left[\overline{p}_i\mathcal{P}_{ij_\kappa}^R\right] \nonumber\\
& \qquad \qquad \qquad \qquad \qquad ~ + \sum_{i \in \mathcal{F}} \left[\overline{p}_i\cdot \prod_{h\in\mathcal{W}_i}\mathcal{P}_{ih}^R\right]\Bigg\}\;.
\end{align}
\end{doublecol}
The last summation term inside the braces is not dependent on $\kappa$, as it affects all receivers in $\mathcal{F}$ whether selected in $\kappa$ or not. Consequently, we can remove it from the braces without affecting the maximization problem. Similarly, adding inside the braces another term that is independent on $\kappa$, such as $\sum_{i\in\mathcal{M}} \overline{p}_i$, will not affect the result of the maximization problem. But since $\mathcal{M}\setminus \nu(\kappa) = \tau_n(\kappa)\cup \tau_u(\kappa)$, we get:
\begin{align}\label{eq:cl-proof1}
\kappa^* & = \arg\max_{\kappa\in\mathcal{G}} \left\{\sum_{i\in\mathcal{M}} \overline{p}_i - \sum_{i\in \nu(\kappa)}\overline{p}_i
- \sum_{i\in \tau_u(\kappa)}\left[\overline{p}_i\mathcal{P}_{ij_\kappa}^R\right] \right\} \nonumber \\
& = \arg\max_{\kappa\in\mathcal{G}} \left\{\sum_{i\in \left(\tau_n(\kappa) \cup\: \tau_u(\kappa)\right)} \overline{p}_i
- \sum_{i\in \tau_u(\kappa)}\left[\overline{p}_i\mathcal{P}_{ij_\kappa}^R\right] \right\} \nonumber\\
& = \arg\max_{\kappa\in\mathcal{G}} \left\{\sum_{i\in \tau_n(\kappa)} \overline{p}_i
+ \sum_{i\in \tau_u(\kappa)}\left[\overline{p}_i\left(1-\mathcal{P}_{ij_\kappa}^R\right)\right] \right\} \;.
\end{align}
Clearly, the above expression in \eqref{eq:cl-proof1} is equivalent to the maximum weight clique problem over the IDNC graph given the weights $\omega_{ij}$ defined in \eqref{eq:weights}.
\end{proof}

\subsection{Low Complexity Algorithm}
Similar to the completion delay problem, the proposed solution for the decoding delay problem is a maximum weight clique problem over the IDNC graph, where all uncertain vertices are kept but weighted differently from the un-attempted vertices as shown in \eqref{eq:weights}. But again, this problem is NP-hard to solve \cite{Garey1979} and NP-hard to approximate \cite{Ausiello1999}. Consequently, we propose to use the same heuristic algorithm proposed in \cite{GC10} to solve the decoding delay problem\ignore{ in perfect feedback scenarios}. This algorithm is similar to the one proposed for the completion delay problem above, but uses the new weights defined in Corollary \ref{th:DD-formulation}. It consists of a weighted vertex search (WVS) algorithm, with modified weights defined as:
\begin{equation}
\widetilde{\omega}_{ij} = \omega_{ij} \cdot \sum_{v_{kl}\in\mathcal{A}(v_{ij})} \omega_{kl}\;,
\end{equation}
where $\mathcal{A}(v_{ij})$ is the set of vertices adjacent to vertex $v_{ij}$. Consequently, the algorithm selects in each iteration the vertex that not only has the highest weight $\omega_{ij}$ but also has strong adjacency to other high weight vertices in the graph. Similar to the completion delay heuristic, the complexity of this algorithm is $O\left(M^2N\right)$, and it was shown to achieve a small degradation compared to the complex optimal maximum weight clique solution \cite{GC10}.

\section{Simulation Results} \label{sec:simulations}
In this section, we compare, through extensive simulations, the performances of our proposed solutions for the completion and decoding delay problems, in lossy feedback environments, to both the perfect feedback (PF) scenario and two other blind graph update approaches described in \cite{ICC11}:
\begin{itemize}
\item Full Vertex Elimination (FVE): Each attempted vertex with unheard feedback is assumed to be hidden in the graph and is temporarily not considered for transmission. These hidden vertices are treated later similar to our proposed algorithm in \sref{sec:blind-update}.
\item No Vertex Elimination (NVE): Each attempted vertex with unheard feedback is kept in the graph.\ignore{ NVE rapidly re-attempts these uncertain vertices, thus giving the chance to the sender to receive feedback from their receivers and to determine their accurate reception status. This fast re-attempts of vertices and SFM update may be of greater importance in this lossy feedback context, especially at the end of the recovery phase, as it can help the sender make better coding decisions towards completion.}
\end{itemize}
We assume channel reciprocity and set $n=3$ in the vertex weight. We also assume that the different receivers experience different erasure probabilities and different demand ratios, while maintaining the average erasure probability $p$ and the average demand ratio $\mu$ constant for each simulation point.

\subsection{Completion Delay Results}
Figures \ref{fig:P}, \ref{fig:D}, \ref{fig:M} and \ref{fig:N} depict the comparison of the average completion delay, achieved by the different algorithms, against $p$ (for $\mu=0.5$, $M=60$, $N=30$), $\mu$ (for $M=60$, $N=30$, $p=0.15$), $M$ (for $\mu=0.5$, $N=30$ and $p=\{0.15,0.5\}$), and $N$ (for $\mu=0.5$, $M=60$, and $p=\{0.15,0.5\}$), respectively.
\begin{figure*}[t]
\centering
    \begin{subfigure}[b]{0.5\textwidth}
    \centering
    \includegraphics[width=1\linewidth]{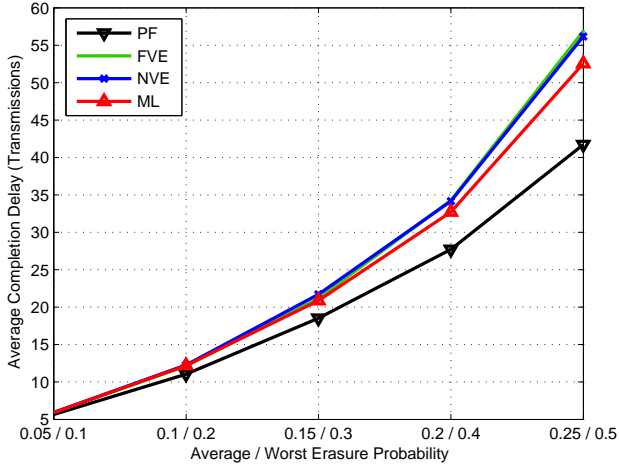}
    \caption{Average completion delay vs $p$.} \label{fig:P}
    \end{subfigure}~
    \begin{subfigure}[b]{0.5\textwidth}
    \centering
    \includegraphics[width=1\linewidth]{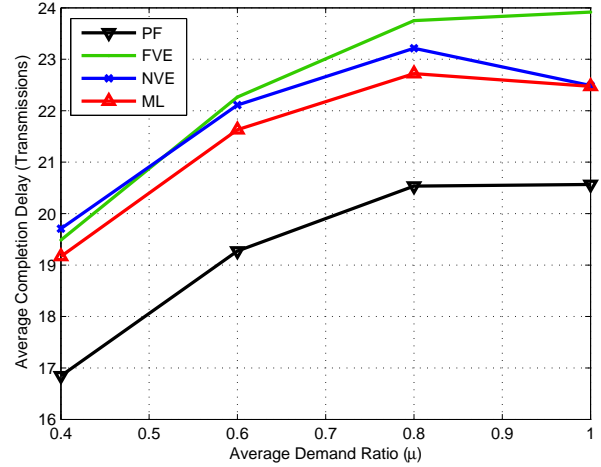}
    \caption{Average completion delay vs $\mu$.} \label{fig:D}
    \end{subfigure}

    \begin{subfigure}[b]{0.5\textwidth}
    \centering
    \includegraphics[width=1\linewidth]{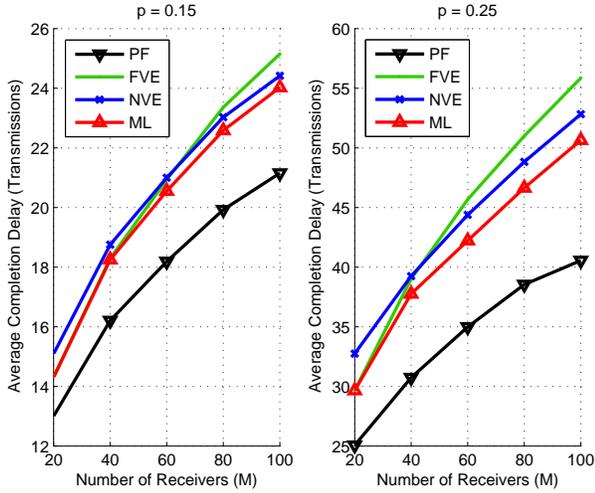}
    \caption{Average completion delay vs $M$.} \label{fig:M}
    \end{subfigure}~
    \begin{subfigure}[b]{0.5\textwidth}
    \centering
    \includegraphics[width=1\linewidth]{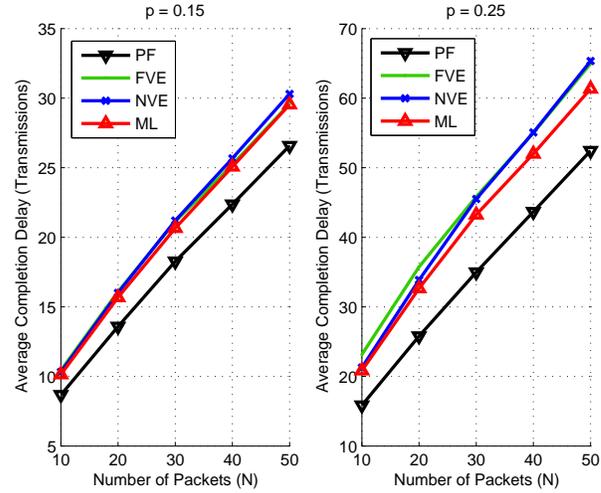}
    \caption{Average completion delay vs $N$.} \label{fig:N}
    \end{subfigure}
     \caption{Completion delay results}
\end{figure*}

From all figures, we can observe the better performance of the ML approach over the FVE and NVE approaches in reducing the average completion delay, especially for receivers with high erasure probabilities and also when $\mu<1$. The result conforms with the analysis in \sref{sec:belief}, showing that the ML update approach of the graph represents the ML estimation of the actual graph, and thus the coding decisions deduced from this graph would on average achieve the best results compared to other graph update approaches.\ignore{ Moreover, we see that the average completion delay of (ML) is not greatly affected by both the non-innovative transmissions of received packets and the potential estimation errors, as long as it always follows the ML state of the network.}

We can also observe that FVE outperforms NVE and approaches ML performance at low demand ratios and small number of receivers, whereas NVE outperforms FVE and approaches ML performance at high demand ratios and large number of receivers. This can be explained from the characteristics of the two approaches. FVE re-attempts the remaining unacknowledged vertices only when all the graph vertices are blindly depleted. Since the primary graph sizes are relatively small for small numbers of receivers and low demand ratios, FVE tends to blindly deplete its graphs very fast and start the re-attempting of the uncertain vertices early. This allows it to finish faster than NVE, which re-attempts vertices a lot in low demand environment.

At large numbers of receivers and high demand ratios, the larger size of the IDNC primary graph increases the time for the FVE approach to blindly deplete the graph. Consequently, all receivers with remaining uncertain vertices will have to wait longer for FVE to re-attempt them, which results in a larger completion delay. This effect increases for the receivers with smaller Wants sets. On the other hand, FVE reduces this effect since it leaves these unacknowledged vertices in the graph, which gives them a chance to speed up their transmission re-attempt, their recovery and/or their feedback. Most importantly, we can see that the ML algorithm achieves a better or similar performance than both of them in all cases, without much addition in complexity.

Finally, we can observe a degradation in the average completion delay obtained in the LF environment, compared to the PF environment. However, for a relatively large network setting ($M=100$, $N=30$), an average erasure probability of 0.25, and worst erasure probability of $0.5$ (\fref{fig:M}), this degradation in the frame delivery duration (from the start of the frame transmission until its reception at all receivers) reaches $14\%$, compared to the perfect feedback algorithm performance. For as high as 0.5 worst-case packet loss probability (\fref{fig:P}), we obtain a degradation of $14\%$. These values are clearly tolerable in such very large networks and up to 50$\%$ loss rate of feedback, which is typically very high for signalling information.

\subsection{Decoding Delay Results}
Figures \ref{fig:PP}, \ref{fig:DD}, \ref{fig:MM} and \ref{fig:NN} depict the comparison of the average decoding delay, achieved by the different algorithms, against $p$ (for $\mu=0.8$, $M=60$, $N=30$), $\mu$ (for $M=60$, $N=30$, $p=0.25$), $M$ (for $\mu=0.8$, $N=30$ and $p=\{0.25,0.5\}$), and $N$ (for $\mu=0.8$, $M=60$, and $p=\{0.25,0.5\}$), respectively.

\begin{figure*}[t]
\centering
    \begin{subfigure}[b]{0.5\textwidth}
    \centering
    \includegraphics[width=1\linewidth]{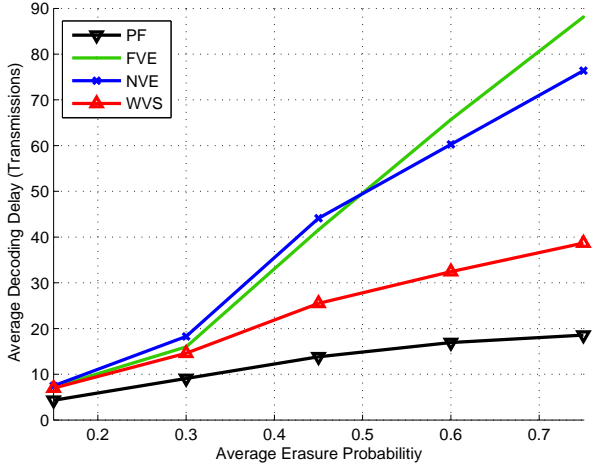}
    \caption{Average decoding delay vs $p$.} \label{fig:PP}
    \end{subfigure}~
    \begin{subfigure}[b]{0.5\textwidth}
    \centering
    \includegraphics[width=1\linewidth]{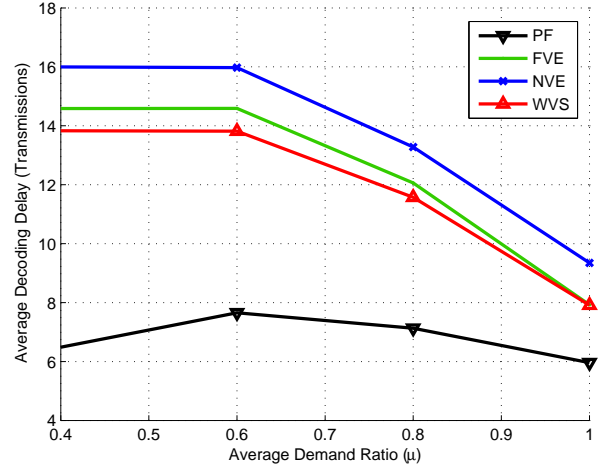}
    \caption{Average decoding delay vs $\mu$.} \label{fig:DD}
    \end{subfigure}

    \begin{subfigure}[b]{0.5\textwidth}
    \centering
    \includegraphics[width=1\linewidth]{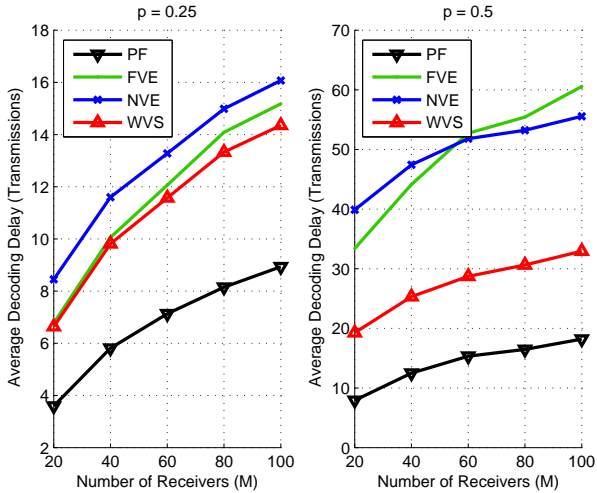}
    \caption{Average decoding delay vs $M$.} \label{fig:MM}
    \end{subfigure}~
    \begin{subfigure}[b]{0.5\textwidth}
    \centering
    \includegraphics[width=1\linewidth]{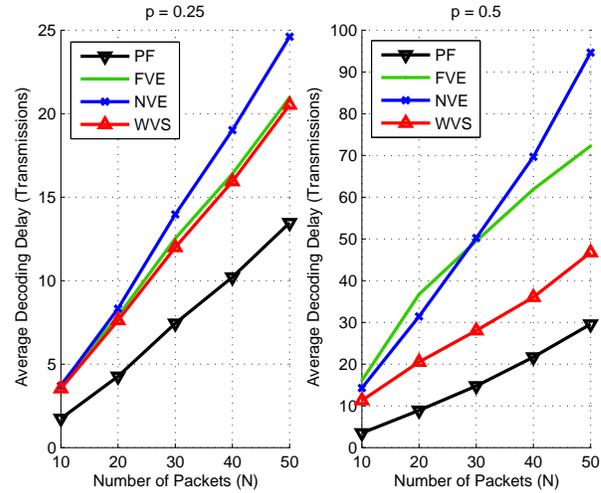}
    \caption{Average decoding delay vs $N$.} \label{fig:NN}
    \end{subfigure}
    \caption{Decoding delay results}
\end{figure*}

From all figures, we can clearly observe that, for all ranges of the different metrics, our proposed WVS approach, which gives different weights to unattempted and uncertain vertices, outperforms both NVE and FVE, which strictly keeps and eliminates the uncertain vertices, respectively.

We can also observe that FVE performs better than NVE in most scenarios, except for large erasure probabilities. This can be interpreted as follows. In most cases, it is usually better for reducing the decoding delay to attempt new vertices rather than attempted uncertain ones, and keep these uncertain ones either for later or for feedback from the same receivers on new attempted packets. Since FVE eliminates such uncertain vertices for any given receiver as long as there are new ones to attempt for this receiver, it thus outperforms NVE, which keeps the uncertain vertices in the process.

However, when the erasure probability becomes larger, it becomes very likely that the uncertain vertices are actually lost, which makes NVE a much more likely candidate of reflecting the actual state of the network compared to FVE.\ignore{ follows a similar ML trend as the one proposed for the completion delay problem\footnote{The ML here is in the forward link sense only (without considering what happened to the feedback), because the decoding delay is affected by individual packet losses from each individual receiver and not the global state of the network. Consequently, ML do not follow the bounds derived in Corollaries \ref{th:ML-simple} and \ref{th:ML-simple-reciprocal} but rather the simple ML rule that the packet is received (lost) on the forward link if the erasure probability is smaller (larger) than 0.5}, whereas FVE falls away from this ML trend.} That is why we can observe in \fref{fig:PP} that the larger the erasure probability, the better the performance of NVE compared to FVE. This also appears in some scenarios in Figures \ref{fig:MM} and \ref{fig:NN}, when $p = 0.5$.

\section{Conclusion} \label{sec:conclusion}
In this paper, we extended the study of the multicast completion and decoding delay minimization problems of IDNC to the lossy feedback environment. For the completion delay problem, we first extended the SSP formulation in perfect feedback to a POSSP formulation, reflecting the uncertainties resulting from unheard feedback events in the lossy feedback environment. We then used this formulation to identify the ML state of the network in events of unheard feedback, and employed it to design a partially blind graph update extension to the multicast IDNC algorithm in \cite{TON10-CD}. For the decoding delay problem, we derived an expression for the expected sum decoding delay increments after any arbitrary transmission and used it to find the optimal policy to reduce the decoding delay in such lossy feedback environment. Simulation results showed that our proposed partially blind solutions both outperformed other approaches and achieved a tolerable degradation in performance (compared to the perfect feedback environment) even at relatively high feedback loss rates.

\ignore{In this letter, we extended the problem of minimizing the completion delay of IDNC to LF environment. We first extended our SSP formulation for the PF environment in \cite{TON10-CD} to a POSSP formulation in the LF environment, by defining the belief state that reflects the uncertainty effects resulting from unheard feedback events. From this extended formulation, we derived the ML estimation of the network state in case of unheard feedback events from any number of receivers. We then proposed a partially blind IDNC graph extension to the algorithms in \cite{TON10-CD}, which follows the ML state of the network. Simulation results showed that this ML graph update approach outperforms two other update approaches for all parameter ranges, especially for large number of receivers. Moreover, the proposed algorithm can achieve a tolerable degradation, compared to the PF environment performance, for considerably high feedback loss rates.}

\bibliographystyle{IEEEtran}
\bibliography{IEEEabrv,bibfile}

\ignore{
\begin{figure}[p]
\centering
  \includegraphics[width=0.55\linewidth]{IDNC-Graph}\\
  \caption{Example of a state feedback matrix and its corresponding IDNC graph. The shaded and white boxes and vertices represent the requested and undesired packets, respectively.}\label{fig:IDNC-graph}
\end{figure}

\begin{figure}[p]
\centering
  \includegraphics[width=0.9\linewidth]{LF-example}
    \caption{Belief state after taking action $1\oplus 4$ in the example of \fref{fig:IDNC-graph}}\label{fig:LF-example}
\end{figure}

\begin{figure}[p]
  \includegraphics[width=1\linewidth]{LF_PMF}\\
  \caption{Conditional pmf variation as a function of the loss probability $p_i$}\label{fig:LF-PMF}
\end{figure}

\begin{figure}[t]
\centering
  \includegraphics[width=1\linewidth]{CD_P}
  \caption{Average completion delay vs $p / p_w$ for $\mu = 0.5$, $M = 60$, $N = 30$
  }\label{fig:LF_CD_P}
\end{figure}

\begin{figure}[t]
\centering
  \includegraphics[width=1\linewidth]{CD_M}
  \caption{Average completion delay vs $M$ for $\mu = 0.5$, $N=30$ $p=0.15$, $p_w = 0.3$}\label{fig:LF_CD_M}
\end{figure}

\begin{figure}[t]
\centering
  \includegraphics[width=1\linewidth]{CD_N}
  \caption{Average completion delay vs $N$ for $\mu = 0.5$, $M=60$ $p=0.15$, $p_w = 0.3$}\label{fig:LF_CD_N}
\end{figure}

\begin{figure}[t]
\centering
  \includegraphics[width=1\linewidth]{CD_D}
  \caption{Average completion delay vs $\mu$ for $M = 60$, $N=30$, $p = 0.15$, $p_w = 0.3$}\label{fig:LF_CD_D}
\end{figure}
}

\ignore{
\textbf{Partially Uncertain Receivers}: $\overline{\mathcal{F}}=\left\{i\mid\mathcal{W}_i\setminus\mathcal{U}_i \neq \emptyset\right\}$\\
For any receiver $i$ in this set, there exists some packets in its Wants set that have never been attempted after $t_i$. For such receiver $i$, we have the following possibilities:
\begin{enumerate}
\item If this receiver $i \notin \tau\left(\kappa\right)$, then two events can occur. Receiver $i$ will not experience a decoding delay increment if and only if it does not receive this transmission. Thus, we have:
\begin{equation}
\mathds{P}\left\{d_i\left(\kappa\right) = 0\right\}  = p_i\;.
\end{equation}
On the other hand, receiver $i$ will experience a decoding delay increment if and only if it receives this transmission. Thus, we have:
\begin{equation}
\mathds{P}\left\{d_i\left(\kappa\right) = 1\right\}  = \overline{p}_i\;.
\end{equation}
\item If receiver $i \in \tau_n(\kappa)$ (i.e. $i$ is targeted by a primary packet that has not been attempted after $t_i$), then $i$ will never experience a decoding delay.
\item If receiver $i \in \tau_u\left(\kappa\right)$ (i.e. $i$ is targeted by a primary packet that has been attempted after $t_i$), then two events can occur. Receiver $i$ will not experience a decoding delay if and only if one or both of the two events occur:
\begin{itemize}
\item It does not receive the transmission.
\item It receives the transmission, but the attempted packet $j_\kappa$ was not previously received by this receiver in any previous attempts since $t_i$.
\end{itemize}
Since these two events are independent and mutually exclusive, we have:
\begin{equation}\label{eq:th2-proof1}
\mathds{P}\left\{d_i\left(\kappa\right) = 0\right\}  = p_i + \overline{p}_i\mathcal{P}^{in}_{ij_\kappa}  \;,
\end{equation}
where $\mathcal{P}^{in}_{ij_\kappa}$ is referred to as the innovation probability of packet $j$ at receiver $i$, defined as the probability that packet $j$ was not previously received by this receiver in any previous attempts since $t_i$.

On the other hand, receiver $i$ will experience an increment in its decoding delay if and only if both:
\begin{itemize}
\item Receiver $i$ receives the transmission.
\item The attempted packet $j$ was previously received in a previous attempt after $t_i$.
\end{itemize}
Thus, we get:
\begin{equation}\label{eq:th2-proof2}
\mathds{P}\left\{d_i\left(\kappa\right) = 1\right\}  = \overline{p}_i\left(1-\mathcal{P}^{in}_{ij_\kappa}\right)\;.
\end{equation}
\end{enumerate}
$\qquad$\\
\textbf{Fully Uncertain Receivers}: $\mathcal{F}=\left\{i\mid\mathcal{W}_i\setminus\mathcal{U}_i = \emptyset\right\}$\\
As defined earlier, the sender has previously attempted all the remaining wanted packets of any such receiver $i$ after $t_i<t$.
For such receiver $i$, we have the following possibilities:
\begin{enumerate}
\item If this receiver $i \notin \tau\left(\kappa\right)$, then two events can occur. Receiver $i$ will not experience a decoding delay increment if and only if either one of the following events is true:
\begin{itemize}
\item It does not receive this transmission.
\item It receives the transmission but has also already received all the packets in $\mathcal{W}_i$ in previous attempts.
\end{itemize}
Since these two events are independent and mutually exclusive, we have:
\begin{equation}
\mathds{P}\left\{d_i\left(\kappa\right) = 0\right\}  = p_i + \left(1- p_i\right)\mathcal{P}^c_i\;,
\end{equation}
where $\mathcal{P}^c_i$ is referred to as the completion probability of receiver $i$, defined as the probability that receiver $i$ has completed the reception of all its wanted packets in previous attempts from the sender. On the other hand, receiver $i$ will experience a decoding delay increment if and only if it both receives this transmission and it did not finish the reception of all its wanted packets in previous attempts. Thus, we have:
\begin{equation}
\mathds{P}\left\{d_i\left(\kappa\right) = 1\right\}  = \overline{p}_i\left(1-\mathcal{P}^c_i\right)\;.
\end{equation}
\item If receiver $i \in \tau_u\left(\kappa\right)$ (i.e. $i$ is targeted by a primary packet that has been attempted after $t_i$), then the resulting decoding delay increment will depend on whether this receiver has actually finished reception or not. If yes, then it will never experience a decoding delay. If not, then its decoding delay increment will be 0 or 1 following the same events and probabilities in \eqref{eq:th2-proof1} or \eqref{eq:th2-proof2}, respectively. Thus, we have:
\begin{align}
\mathds{P}\left\{d_i\left(\kappa\right) = 0\right\}  &= \mathcal{P}^c_i + \left(1-\mathcal{P}^c_i\right)\left(p_i + \left(1- p_i\right)\mathcal{P}^{in}_{ij_\kappa}\right)   \;,\\
\mathds{P}\left\{d_i\left(\kappa\right) = 1\right\}  &= \left(1-\mathcal{P}^c_i\right)\overline{p}_i\left(1-\mathcal{P}^{in}_{ij_\kappa}\right)\;.
\end{align}
\end{enumerate}
}

\ignore{
\begin{IEEEbiography}[{\includegraphics[width=1in,height=1.25in,clip,keepaspectratio]{samehsorour-bw}}]{Sameh Sorour} (S '98) received the B.Sc. and M.Sc. degrees in Electrical Engineering from Alexandria University, Egypt, in 2002 and 2006, respectively. In 2002, he joined the Department of Electrical Engineering, Alexandria
University, where he was a Teaching and Research Assistant for three years and was promoted to
Assistant Lecturer in 2006. He is currently working towards the Ph.D. degree at the Wireless and Internet
Research Laboratory (WIRLab), Department of Electrical and Computer Engineering, University of
Toronto, Canada. His research interests include opportunistic, random\ignore{ and instantly decodable} network coding applications in wireless networks, vehicular and high speed train networks, indoor localization, adaptive resource allocation, OFDMA, and wireless scheduling.
\end{IEEEbiography}

\begin{IEEEbiography}[{\includegraphics[width=1in,height=1.25in,clip,keepaspectratio]{valaee}}]{Shahrokh Valaee} (S '88, M '00, SM '02) holds the Nortel Institute Junior Chair of Communication Networks and is the director of the Wireless and Internet Research Laboratory (WIRLab), both in the Edward S. Rogers Sr. Department of Electrical and Computer Engineering, University of Toronto, Canada. Prof. Valaee was the Co-Chair for the Wireless Communications Symposium of IEEE GLOBECOM 2006, a Guest Editor for IEEE Wireless Communications Magazine, a Guest Editor for Wiley Journal on Wireless Communications and Mobile Computing, and a Guest Editor of EURASIP Journal on Advances in Signal
Processing. He is an Editor of IEEE Transactions on Wireless Communications and the TPC-Chair of
IEEE PIMRC 2011. His current research interests are in wireless vehicular and sensor networks,
location estimation and cellular networks.
\end{IEEEbiography}
}

\end{document}